\documentclass{article} 
\usepackage{iclr2025_conference,times}


\usepackage{amsmath,amsfonts,bm}









\def\eqref#1{equation~\ref{#1}}









\def\1{\bm{1}}










\DeclareMathAlphabet{\mathsfit}{\encodingdefault}{\sfdefault}{m}{sl}
\SetMathAlphabet{\mathsfit}{bold}{\encodingdefault}{\sfdefault}{bx}{n}













\usepackage{hyperref}
\usepackage{url}

\usepackage{graphicx}
\usepackage{amsmath}
\usepackage{amssymb}
\usepackage{booktabs}
\usepackage{caption}
\usepackage{multirow}
\usepackage{bm}
\usepackage{dsfont}
\usepackage{subcaption}
\usepackage{geometry}
\usepackage{hyperref}
\usepackage{colortbl}
\usepackage{xcolor}
\usepackage{booktabs}
\usepackage{microtype}
\usepackage{wrapfig}
\usepackage{comment}

\usepackage{float}       
\usepackage{multirow}      
\usepackage{subcaption}    
\usepackage{booktabs}      
\usepackage{graphicx}      
\usepackage{listings}

\definecolor{commentcolor}{RGB}{110,154,155}   

\usepackage{pifont} 
\usepackage{amsthm}
\usepackage{amsmath}
\usepackage{proof}
\usepackage{algorithm}

\usepackage[noend]{algpseudocode}

\newtheorem{theorem}{Theorem}
\newtheorem{lemma}[theorem]{Lemma}

\newtheorem{corollary}[theorem]{Corollary}

\newtheorem{remark}{Remark}
\usepackage{booktabs}      
\usepackage{graphicx}      
\definecolor{LightCyan}{rgb}{0.88,1,1} 
\definecolor{RecycledGlass}{HTML}{E2E6E5}

\title{Score-based Self-supervised MRI Denoising}


\author{Jiachen Tu, Yaokun Shi \& Fan Lam \\
Department of Electrical and Computer Engineering\\
University of Illinois at Urbana-Champaign\\
Champaign, IL 61820, USA \\
\texttt{\{jtu9, yaokuns2, fanlam1\}@illinois.edu} \\
}

%

\iclrfinalcopy 
\begin{document}

\maketitle
\begin{abstract}
Magnetic resonance imaging (MRI) is a powerful noninvasive diagnostic imaging tool that provides unparalleled soft tissue contrast and anatomical detail. Noise contamination, especially in accelerated and/or low-field acquisitions, can significantly degrade image quality and diagnostic accuracy. Supervised learning based denoising approaches have achieved impressive performance but require high signal-to-noise ratio (SNR) labels, which are often unavailable. Self-supervised learning holds promise to address the label scarcity issue, but existing self-supervised denoising methods tend to oversmooth fine spatial features and often yield inferior performance than supervised methods. We introduce Corruption2Self (C2S), a novel score-based self-supervised framework for MRI denoising. At the core of C2S is a generalized denoising score matching (GDSM) loss, which extends denoising score matching to work directly with noisy observations by modeling the conditional expectation of higher-SNR images given further corrupted observations. This allows the model to effectively learn denoising across multiple noise levels directly from noisy data. Additionally, we incorporate a reparameterization of noise levels to stabilize training and enhance convergence, and introduce a detail refinement extension to balance noise reduction with the preservation of fine spatial features. Moreover, C2S can be extended to multi-contrast denoising by leveraging complementary information across different MRI contrasts. We demonstrate that our method achieves state-of-the-art performance among self-supervised methods and competitive results compared to supervised counterparts across varying noise conditions and MRI contrasts on the M4Raw and fastMRI dataset. The project website is available at: \url{https://jiachentu.github.io/Corruption2Self-Self-Supervised-Denoising/}.

\end{abstract}

\section{Introduction}

Magnetic resonance imaging (MRI) is an invaluable noninvasive imaging modality that provides exceptional soft tissue contrast and anatomical detail, playing a crucial role in clinical diagnosis and research. However, the inherent sensitivity of MRI to noise, particularly in accelerated acquisitions and/or low-field settings, can impair diagnostic accuracy and subsequent computational analysis. Unlike natural images, MRI data often contains Rician or non-central chi-distributed noise in magnitude images. Denoising has become an important processing step in the MRI data analysis pipeline. Higher signal-to-noise ratios (SNR) enable better trade-offs for reduced scanning times or increased spatial resolution, both of which can improve patient experience and diagnostic accuracy.
Traditional denoising methods, such as Non-Local Means (NLM)~\cite{froment2014parameter} and BM3D~\cite{makinen2020collaborative}, often rely on handcrafted priors such as Gaussianity, self-similarity, or low-rankness. The performance of these methods is limited by the accuracy of their assumptions and often requires prior knowledge specific to the acquisition methods used. With the advent of deep learning, supervised learning based denoising methods~\cite{zhang2017beyond, liang2021swinir, zamir2022restormer, sun2025tenth} have demonstrated impressive performance by learning complex mappings from noisy inputs to clean targets. However, they rely heavily on the availability of high-quality, high-SNR ``ground truth'' data for training---a resource that is not always readily available or feasible to acquire in MRI. Acquiring such clean labels necessitates longer scan times, leading to increased costs and patient discomfort. This limitation underscores the need for self-supervised denoising algorithms, which remove or reduce the dependency on annotated datasets by leveraging self-generated pseudo-labels as supervisory signals. Moreover, supervised approaches often face generalization issues due to distribution shifts caused by differences in imaging instruments, protocols, and noise levels~\cite{xiang2023ddm}, limiting their utility in real-world clinical settings. Techniques such as Noise2Noise~\cite{lehtinen2018noise2noise}, Noise2Void~\cite{krull2019noise2void}, Noise2Self~\cite{batson2019noise2self}, and their extensions~\cite{xie2020noise2same, huang2021neighbor2neighbor, pang2021recorrupted, jang2024puca, wang2023lg} have demonstrated the potential to learn effective denoising models without explicit clean targets. Recent innovations tailored for MRI, like Patch2Self~\cite{fadnavis2020patch2self} and Coil2Coil~\cite{park2022coil2coil}, further exploit domain-specific characteristics to enhance performance. However, these self-supervised methods often have limitations. For example, Noise2Noise requires repeated noisy measurements, which may not be practical. Methods like Noise2Void and Noise2Self rely on masking strategies that can limit the receptive field or introduce artifacts, potentially leading to oversmoothing and loss of fine details. Approaches such as \cite{pfaff2023self} and \cite{park2022coil2coil} require additional data processing steps, restricting their applicability.
In this paper, we introduce Corruption2Self, a score-based self-supervised framework for MRI denoising. Building upon the principles of denoising score matching (DSM)~\cite{vincent2011connection}, we extend DSM to the ambient noise setting, where only noisy observations are available, enabling effective learning in practical MRI settings where high-SNR data are scarce or impractical to obtain. An overview of the C2S workflow is illustrated in Figure~\ref{fig:c2s_workflow}. Additionally, we incorporate a reparameterization of noise levels for a consistent coverage of the noise level range during training, leading to enhanced training stability and convergence. In medical imaging, the visual quality of the output and the preservation of diagnostically relevant features are often more critical than achieving high scores on standard metrics. To address this priority, a detail refinement extension is introduced to balance noise reduction with the preservation of fine spatial feature. Furthermore, we extend C2S to multi-contrast denoising by integrating data from multiple MRI contrasts. This approach leverages complementary information across contrasts, indicating the potential of C2S to better exploit the rich information available in multi-contrast MRI acquisitions.

\begin{figure}[hbt]
\centering
\includegraphics[width=\textwidth]{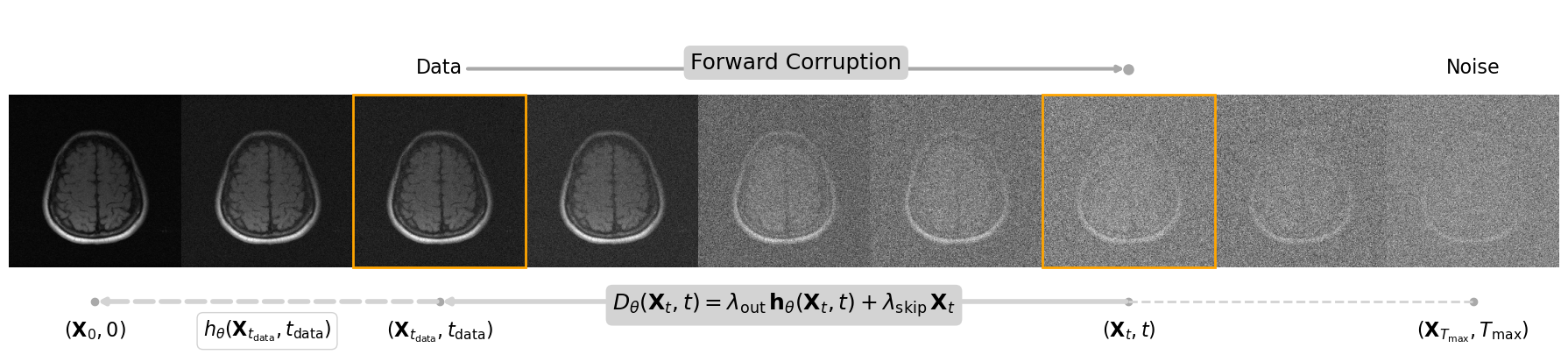}

\caption{Overview of the Corruption2Self (C2S) workflow for MRI denoising. Starting from a noisy MRI image \(\mathbf{X}_{t_{\text{data}}}\), the forward corruption process adds additional Gaussian noise to create progressively noisier versions \(\mathbf{X}_t\). During training, the model learns to reverse this process by estimating the clean image \(\mathbf{X}_0\) from these corrupted observations, despite having access only to noisy data. The denoising function \(\mathbf{h}_\theta\) approximates the conditional expectation \(\mathbb{E}[\mathbf{X}_0 \mid \mathbf{X}_t]\), effectively learning to denoise without clean targets. A reparameterized function \(D_\theta(\mathbf{X}_t, t)\), which shares parameters with \(\mathbf{h}_\theta\), is used to compute the loss.}

\label{fig:c2s_workflow}
\end{figure}

We conduct extensive experiments on publicly available datasets, including a low-field MRI dataset, to evaluate the performance of C2S. Our results demonstrate that C2S achieves state-of-the-art performance among self-supervised methods and, after extending to multi-contrast on the M4Raw dataset, shows state-of-the-art performance among both self-supervised and supervised methods. Notably, we are among the first to comprehensively analyze and compare self-supervised and supervised learning approaches in MRI denoising. Our findings reveal that C2S not only bridges the performance gap but also offers robust performance under varying noise conditions and MRI contrasts. This indicates the potential of self-supervised learning to achieve competitive performance with supervised approaches when the latter are trained on practically obtainable higher-SNR labels, particularly in scenarios where perfectly clean ground truth is unavailable, offering a practical and robust solution adaptable to broader clinical settings.

\section{Background}

\subsection{Learning-based denoising without clean targets}

A central concept in many of the self-supervised denoising methods is 
J-invariance~\cite{batson2019noise2self}, where the denoising function is designed to be invariant to certain subsets of pixel values. Techniques like Noise2Void \cite{krull2019noise2void} and Noise2Self \cite{batson2019noise2self} utilize this property by masking pixels and predicting their values based on their surroundings, approximating supervised learning objectives without the need for clean data. Noise2Same \cite{xie2020noise2same} extends these concepts by implicitly enforcing $\mathcal{J}$-invariance via optimizing a self-supervised upper bound. Approaches such as Neighbor2Neighbor \cite{huang2021neighbor2neighbor}, Noisier2Noise \cite{moran2020noisier2noise}, and Recorrupted2Recorrupted \cite{pang2021recorrupted} create pairs of images from a single noisy observation to mimic the effect of supervised training objectives with independent noisy pairs. Noise2Score \cite{kim2021noise2score} exploits the assumption that noise follows an exponential family distribution and utilizes Tweedie's formula to obtain the posterior expectation of clean images using the estimated score function via the AR-DAE \cite{lim2020ar} approach. 
This approach highlights a connection between Stein's Unbiased Risk Estimator (SURE)-based denoising methods~\cite{soltanayev2018training, kim2020unsupervised} and score matching objectives~\cite{hyvarinen2005estimation}. Specifically, under additive Gaussian noise, the SURE cost function can be reformulated as an implicit score matching objective, differing only by a scaling factor and a constant term~\cite{kim2021noise2score}.
In the context of MRI, recent innovations have capitalized on the intrinsic properties of data within a self-supervised framework \cite{moreno2021evaluation} to enhance denoising performance without clean labels. For standard MRI, \cite{pfaff2023self} utilizes SURE \cite{kim2020unsupervised} and spatially resolved noise maps to improve denoising performance, while Coil2Coil \cite{park2022coil2coil} leverages coil sensitivity to exploit multi-coil information effectively. In the realm of diffusion MRI (dMRI), specialized approaches have emerged to address the unique challenges of 4D data. Patch2Self \cite{fadnavis2020patch2self} employs a J-invariant approach that preserves critical anatomical details by selectively excluding target volume data from its training inputs, though it requires a minimum of ten additional diffusion volumes. DDM2 \cite{xiang2023ddm} introduces a three-stage framework incorporating statistical image denoising into diffusion models. More recently, Wu et al. \cite{wu2025selfsupervised} proposed Di-Fusion, a single-stage self-supervised approach that performs dMRI denoising by integrating statistical techniques with diffusion models through a Fusion process that prevents drift in results and a "Di-" process that better characterizes real-world noise distributions.

\subsection{Denoising Score Matching} 

Denoising Score Matching (DSM)~\cite{vincent2011connection} is a framework for learning the score function (i.e., the gradient of the log-density) of the data distribution by training a model to reverse the corruption process of adding Gaussian noise. In DSM, given a clean data sample \(\mathbf{X}_0 \in \mathbb{R}^d\) and a noise level \(\sigma_t > 0\), a noisy observation \(\mathbf{X}_t\) is generated by adding Gaussian noise:$\mathbf{X}_t = \mathbf{X}_0 + \sigma_t \mathbf{Z}, \quad \mathbf{Z} \sim \mathcal{N}(\mathbf{0}, \mathbf{I}_d).\label{eq:dsm_noisy_observation}$ The objective is to train a denoising function \(\mathbf{h}_\theta: \mathbb{R}^d \times \mathbb{R} \rightarrow \mathbb{R}^d\), parameterized by \(\theta\), that estimates the clean image \(\mathbf{X}_0\) from its noisy counterpart \(\mathbf{X}_t\). This is achieved by minimizing the expected mean squared error (MSE) loss $\mathbb{E}_{\mathbf{X}_0, \mathbf{X}_t} \left[ \left\| \mathbf{h}_\theta(\mathbf{X}_t, t) - \mathbf{X}_0 \right\|^2 \right]\label{eq:dsm_mse_loss}$. \cite{vincent2011connection} demonstrated that optimizing the denoising function \(\mathbf{h}_\theta\) using the MSE loss is equivalent to learning the score function of the noisy data distribution \(p_t(\mathbf{x}_t)\) up to a scaling factor. Specifically, using \textit{Tweedie's formula}, the relationship between the denoising function and the score function is given by: $\nabla_{\mathbf{x}_t} \log p_t(\mathbf{x}_t) = \frac{1}{\sigma_t^2} \left( \mathbf{h}_\theta(\mathbf{x}_t, t) - \mathbf{x}_t \right). \label{eq:tweedie_formula}$
By learning \(\mathbf{h}_\theta\), we implicitly learn \(\nabla_{\mathbf{x}_t} \log p_t(\mathbf{x}_t)\), which forms the foundation of score-based generative models~\cite{song2019generative, ho2020denoising}.
However, DSM relies on access to clean data during training, limiting its applicability in situations where only noisy observations are available. \textit{Ambient Denoising Score Matching} (ADSM)~\cite{daras2024consistent} leverages a double application of Tweedie's formula to relate noisy and clean distributions, enabling score-based learning from noisy observations. While ADSM was originally designed to mitigate memorization in large diffusion models by treating noisy data as a form of regularization~\cite{daras2024consistent}, its potential for self-supervised denoising remains underexplored. 
In our work, we bridge the gap of score-based self-supervised denoising in practical imaging applications.

\section{Methodology}
Consider a clean image \(\mathbf{X}_0 \in \mathbb{R}^d\) and its corresponding noisy observation \(\mathbf{X}_{t_{\text{data}}} \in \mathbb{R}^d\). We formulate the self-supervised denoising problem as estimating the conditional expectation \(\mathbb{E}[\mathbf{X}_0 \mid \mathbf{X}_{t_{\text{data}}}]\), which constitutes the optimal estimator of \(\mathbf{X}_0\) in the minimum mean square error (MMSE) sense. While estimating this typically requires clean or high-SNR reference images in a supervised learning setting, we adopt an ambient score-matching perspective inspired by \cite{daras2024consistent}, circumventing the need for clean labels. The noisy image can be modeled as:
\begin{equation}
    \mathbf{X}_{t_{\text{data}}} = \mathbf{X}_0 + \sigma_{t_{\text{data}}} \mathbf{N},
    \label{eq:noisy_model}
\end{equation}
where \(\sigma_{t_{\text{data}}} > 0\) denotes the noise standard deviation at level \(t_{\text{data}}\), and \(\mathbf{N} \sim \mathcal{N}(\mathbf{0}, \mathbf{I}_d)\) represents the noise component. In MRI, the noise \(\mathbf{N}\) is typically assumed to be additive, ergodic, stationary, uncorrelated, and white in k-space~\cite{liang2000principles}. When the signal-to-noise ratio (SNR) exceeds two, the noise in the image domain can be well-approximated as Gaussian distributed~\cite{gudbjartsson1995rician}. While this Gaussian assumption underpins our theoretical framework, our empirical results suggest robustness even when this condition is not strictly satisfied.

The noise level \(\sigma_{t_{\text{data}}}\) scales the noise to match the observed noise level in the MRI data. To facilitate self-supervised learning, we introduce a forward corruption process that systematically adds additional Gaussian noise to \(\mathbf{X}_{t_{\text{data}}}\), defining a continuum of increasingly noisy versions of the data:
\begin{equation}
    \mathbf{X}_t = \mathbf{X}_{t_{\text{data}}} + \sqrt{\sigma_t^2 - \sigma_{t_{\text{data}}}^2} \, \mathbf{Z}, \quad \mathbf{Z} \sim \mathcal{N}(\mathbf{0}, \mathbf{I}_d), \quad t > t_{\text{data}},
    \label{eq:forward_process}
\end{equation}
where \(\sigma_t\) is a strictly monotonically increasing noise schedule function for \(t \in (t_{\text{data}}, T]\), with \(T\) being the maximum noise level. This process allows us to model the distribution of the noisy data at different noise levels and forms the foundation for our generalized denoising score matching approach. In scenarios where the noise deviates from Gaussianity (e.g., Rician noise in low-SNR regions), pre-processing techniques such as the Variance Stabilizing Transform (VST)~\cite{foi2011noise} can be applied to better approximate Gaussian statistics. Our goal is to train a denoising function \(\mathbf{h}_\theta : \mathbb{R}^d \times \mathbb{R} \to \mathbb{R}^d\), parameterized by \(\theta\), which maps a noisy input \(\mathbf{X}_t\) at noise level \(t\) to an estimate of either the clean image \(\mathbf{X}_0\) or a less noisy version corresponding to a target noise level \(\sigma_{t_{\text{target}}} \leq \sigma_{t_{\text{data}}}\), where \(\mathbf{X}_{t_{\text{target}}} = \mathbf{X}_0 + \sigma_{t_{\text{target}}} \mathbf{N}_0\), with \(\mathbf{N}_0 \sim \mathcal{N}(\mathbf{0}, \mathbf{I}_d)\).

\subsection{Generalized Denoising Score Matching}

We introduce the Generalized Denoising Score Matching (GDSM) loss, which enables learning a denoising function directly from noisy observations by modeling the conditional expectation of a higher-SNR image given a further corrupted version of the noisy data (proof provided in Appendix~\ref{thm_gdsm_proof}).

\begin{theorem}[Generalized Denoising Score Matching]
\label{thm_gdsm}
Let \(\mathbf{X}_0 \in \mathbb{R}^d\) be a clean data sample drawn from the distribution \(p_0(\mathbf{x}_0)\). Suppose that the noisy observation at a given data noise level \(t_{\text{data}}\) is
\[
\mathbf{X}_{t_{\text{data}}} = \mathbf{X}_0 + \sigma_{t_{\text{data}}} \mathbf{N}, \quad \mathbf{N} \sim \mathcal{N}(\mathbf{0}, \mathbf{I}_d),
\]
and that for any \(t > t_{\text{data}}\) the observation \(\mathbf{X}_t\) is generated according to the forward process described in Equation~\eqref{eq:forward_process}. Let
$
\mathbf{h}_\theta : \mathbb{R}^d \times (t_{\text{data}},T] \to \mathbb{R}^d
$
be a denoising function parameterized by \(\theta\), and fix a target noise level \(\sigma_{t_{\text{target}}}\) satisfying
$
0 \leq \sigma_{t_{\text{target}}} \leq \sigma_{t_{\text{data}}}.
$
Define the loss function
\begin{equation}
    J(\theta) = \mathbb{E}_{\mathbf{X}_{t_{\text{data}}},\, t,\, \mathbf{X}_t} \!\left[ \left\| \gamma\bigl(t,\sigma_{t_{\text{target}}}\bigr) \, \mathbf{h}_\theta(\mathbf{X}_t,t) + \delta\bigl(t,\sigma_{t_{\text{target}}}\bigr) \, \mathbf{X}_t - \mathbf{X}_{t_{\text{data}}} \right\|^2 \right],
    \label{eq:general_loss}
\end{equation}
where \(t\) is sampled uniformly from \((t_{\text{data}},T]\) and the coefficients are defined by
\[
\gamma\bigl(t,\sigma_{t_{\text{target}}}\bigr) := \frac{\sigma_t^2 - \sigma_{t_{\text{data}}}^2}{\sigma_t^2 - \sigma_{t_{\text{target}}}^2} \quad \text{and} \quad \delta\bigl(t,\sigma_{t_{\text{target}}}\bigr) := \frac{\sigma_{t_{\text{data}}}^2 - \sigma_{t_{\text{target}}}^2}{\sigma_t^2 - \sigma_{t_{\text{target}}}^2}\!.
\]
Then any minimizer \(\theta^*\) of \(J(\theta)\) satisfies
\[
\mathbf{h}_{\theta^*}(\mathbf{X}_t,t) = \mathbb{E}\Bigl[ \mathbf{X}_{t_{\text{target}}} \,\Big|\, \mathbf{X}_t \Bigr].
\]
In other words, the optimal denoising function recovers the conditional expectation of the image with noise level \(\sigma_{t_{\text{target}}}\) given the more heavily corrupted observation \(\mathbf{X}_t\).
\end{theorem}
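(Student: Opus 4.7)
The plan is to reduce the theorem to a straightforward minimum mean squared error (MMSE) argument combined with two applications of Tweedie's formula. First, I would observe that the loss $J(\theta)$ is the expected squared error between a function of $(\mathbf{X}_t,t)$, namely $g_\theta(\mathbf{X}_t,t) := \gamma\,\mathbf{h}_\theta(\mathbf{X}_t,t) + \delta\,\mathbf{X}_t$, and the random variable $\mathbf{X}_{t_{\text{data}}}$. Since $\gamma$ and $\delta$ are fixed constants (given $t$ and $\sigma_{t_{\text{target}}}$) and $\mathbf{X}_t$ is the argument of $g_\theta$, the well-known MMSE property implies that, assuming the parametric class is rich enough, the optimal $g^*$ is the conditional expectation $\mathbb{E}[\mathbf{X}_{t_{\text{data}}}\mid \mathbf{X}_t]$. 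Hence at any minimizer $\theta^*$ we have $\gamma\,\mathbf{h}_{\theta^*}(\mathbf{X}_t,t) + \delta\,\mathbf{X}_t = \mathbb{E}[\mathbf{X}_{t_{\text{data}}}\mid \mathbf{X}_t]$ almost surely.

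The second step is to express both $\mathbb{E}[\mathbf{X}_{t_{\text{data}}}\mid \mathbf{X}_t]$ and $\mathbb{E}[\mathbf{X}_{t_{\text{target}}}\mid \mathbf{X}_t]$ in a common form. Viewing the forward corruption as a Markov chain $\mathbf{X}_0 \to \mathbf{X}_{t_{\text{target}}} \to \mathbf{X}_{t_{\text{data}}} \to \mathbf{X}_t$ with independent Gaussian increments whose variances sum to $\sigma_t^2$, the conditional density of $\mathbf{X}_t$ given either $\mathbf{X}_{t_{\text{data}}}$ or $\mathbf{X}_{t_{\text{target}}}$ is Gaussian with respective variances $\sigma_t^2-\sigma_{t_{\text{data}}}^2$ and $\sigma_t^2-\sigma_{t_{\text{target}}}^2$. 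Applying Tweedie's formula in each case yields
\begin{align*}
\mathbb{E}[\mathbf{X}_{t_{\text{data}}}\mid \mathbf{X}_t] &= \mathbf{X}_t + (\sigma_t^2-\sigma_{t_{\text{data}}}^2)\,\nabla\log p_t(\mathbf{X}_t), \\
\mathbb{E}[\mathbf{X}_{t_{\text{target}}}\mid \mathbf{X}_t] &= \mathbf{X}_t + (\sigma_t^2-\sigma_{t_{\text{target}}}^2)\,\nabla\log p_t(\mathbf{X}_t).
\end{align*}

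The third step is a short algebraic check that eliminates the score between these two identities. Solving for $\nabla\log p_t(\mathbf{X}_t)$ from the second equation and substituting into the first gives $\mathbb{E}[\mathbf{X}_{t_{\text{data}}}\mid \mathbf{X}_t] = \gamma\,\mathbb{E}[\mathbf{X}_{t_{\text{target}}}\mid \mathbf{X}_t] + \delta\,\mathbf{X}_t$, where I would verify that the coefficients $\gamma = (\sigma_t^2-\sigma_{t_{\text{data}}}^2)/(\sigma_t^2-\sigma_{t_{\text{target}}}^2)$ and $\delta = (\sigma_{t_{\text{data}}}^2-\sigma_{t_{\text{target}}}^2)/(\sigma_t^2-\sigma_{t_{\text{target}}}^2)$ satisfy $\gamma+\delta=1$ and $\gamma(\sigma_t^2-\sigma_{t_{\text{target}}}^2) = \sigma_t^2-\sigma_{t_{\text{data}}}^2$. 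Combining this identity with the conclusion of Step 1 gives $\gamma\,\mathbf{h}_{\theta^*}(\mathbf{X}_t,t) + \delta\,\mathbf{X}_t = \gamma\,\mathbb{E}[\mathbf{X}_{t_{\text{target}}}\mid \mathbf{X}_t] + \delta\,\mathbf{X}_t$; since $\gamma\neq 0$ whenever $\sigma_t>\sigma_{t_{\text{data}}}$ (which holds by the assumption $t>t_{\text{data}}$ and monotonicity of $\sigma_t$), we can cancel to obtain the desired identity $\mathbf{h}_{\theta^*}(\mathbf{X}_t,t) = \mathbb{E}[\mathbf{X}_{t_{\text{target}}}\mid \mathbf{X}_t]$.

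The main obstacle is conceptual rather than computational: it is ensuring that the three variables $\mathbf{X}_{t_{\text{target}}}$, $\mathbf{X}_{t_{\text{data}}}$, and $\mathbf{X}_t$ are placed on a single Gaussian diffusion trajectory so that Tweedie's formula applies uniformly to both conditional expectations. Once this Markovian coupling is in place, the rest is a one-line MMSE argument and an algebraic elimination of the score, and the usual expressiveness caveat on $\mathbf{h}_\theta$ is the only remaining assumption needed to equate the pointwise minimizer of $J(\theta)$ with the Bayes-optimal regressor.
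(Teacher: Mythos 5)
Your proposal is correct and follows essentially the same route as the paper's proof: reduce $J(\theta)$ to an MSE regression of $\mathbf{g}_\theta = \gamma\,\mathbf{h}_\theta + \delta\,\mathbf{X}_t$ onto $\mathbf{X}_{t_{\text{data}}}$, apply Tweedie's formula twice to express both $\mathbb{E}[\mathbf{X}_{t_{\text{data}}}\mid\mathbf{X}_t]$ and $\mathbb{E}[\mathbf{X}_{t_{\text{target}}}\mid\mathbf{X}_t]$ via the score of $p_t$, eliminate the score, and cancel $\gamma > 0$. Your explicit emphasis on the Markovian coupling $\mathbf{X}_0 \to \mathbf{X}_{t_{\text{target}}} \to \mathbf{X}_{t_{\text{data}}} \to \mathbf{X}_t$ is a slightly more careful statement of the consistency assumption the paper uses implicitly, but it is not a different argument.
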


\begin{remark}
The proposed GDSM framework generalizes several existing methods: when 
$
\sigma_{t_{\text{target}}} = \sigma_{t_{\text{data}}},
$
GDSM reduces to the standard denoising score matching (DSM)~\cite{vincent2011connection}; when 
$
\sigma_{t_{\text{target}}} = 0,
$
it recovers the ambient denoising score matching (ADSM) formulation~\cite{daras2024consistent}. Moreover, GDSM subsumes Noisier2Noise~\cite{moran2020noisier2noise} as a special case. By setting \(\sigma_{t_{\text{target}}} = 0\) and fixing the noise level to
$
\sigma_t^2 = (1+\alpha^2)\sigma_{t_{\text{data}}}^2,
$
the coefficients simplify to
$
\gamma(t, 0) = \frac{\alpha^2}{1+\alpha^2} \quad \text{and} \quad \delta(t, 0) = \frac{1}{1+\alpha^2},
$
thereby generalizing Noisier2Noise to a continuous range of noise levels. For further details, see Section~\ref{subsec:nr2n_connection}.
\end{remark}

To enhance training stability and improve convergence, we introduce a \textbf{reparameterization of the noise levels}. Let \(\tau \in (0, T']\) be a new variable defined by
\begin{equation}
\sigma_\tau^2 = \sigma_t^2 - \sigma_{t_{\text{data}}}^2, \quad T' = \sqrt{\sigma_T^2 - \sigma_{t_{\text{data}}}^2}.
\end{equation}
The original \(t\) can be recovered via the inverse of \(\sigma_t\), as: 
\begin{equation}
t = \sigma_t^{-1}\left( \sqrt{ \sigma_\tau^2 + \sigma_{t_{\text{data}}}^2 } \right).
\label{eq:recover_t}
\end{equation}
Since \(\sigma_t\) is strictly increasing, \(\sigma_t^{-1}\) is well-defined. In practice, as \(T' \gg t_{\text{data}}\), we approximate \(T' \approx T\) to allow uniform sampling over \(\tau\) and consistent coverage of the noise level range during training, leading to smoother and faster convergence. 
To further improve training stability, we combine our reparameterization strategy with Exponential Moving Average (EMA) of model parameters.
Details of the training dynamics with and without reparameterization (and the effect of EMA) are provided in Appendix~\ref{sec:appendix_ablation} (see Figure~\ref{fig:reparam_compare_FLAIR}).

Under this reparameterization, the loss function in Equation~\eqref{eq:general_loss} becomes:
\begin{equation}
    J'(\theta) = \mathbb{E}_{\mathbf{X}_{t_{\text{data}}}, \tau, \mathbf{X}_t} \left[ \left\| \gamma'(\tau, \sigma_{t_{\text{target}}}) \, \mathbf{h}_\theta(\mathbf{X}_t, t) + \delta'(\tau, \sigma_{t_{\text{target}}}) \, \mathbf{X}_t - \mathbf{X}_{t_{\text{data}}} \right\|^2 \right],
    \label{eq:reparam_loss}
\end{equation}
where the coefficients are: 
\begin{equation}
\gamma'(\tau, \sigma_{t_{\text{target}}}) = \frac{\sigma_{\tau}^2}{\sigma_{\tau}^2 + \sigma_{t_{\text{data}}}^2 - \sigma_{t_{\text{target}}}^2}, \quad \delta'(\tau, \sigma_{t_{\text{target}}}) = \frac{\sigma_{t_{\text{data}}}^2 - \sigma_{t_{\text{target}}}^2}{\sigma_{\tau}^2 + \sigma_{t_{\text{data}}}^2 - \sigma_{t_{\text{target}}}^2}.
\end{equation}

\begin{corollary}[Reparameterized Generalized Denoising Score Matching]
\label{cor_reparam_gdsm}
With our reparameterization strategy, the minimizer \(\theta^*\) of the objective \(J'(\theta)\) satisfies (proof provided in Appendix~\ref{cor_reparam_gdsm_full}):
\[
\mathbf{h}_{\theta^*}(\mathbf{X}_t, t) = \mathbb{E}\left[ \mathbf{X}_{t_{\text{target}}} \mid \mathbf{X}_t \right], \quad \forall \mathbf{X}_t \in \mathbb{R}^d, \ t > t_{\text{data}}.
\]
\end{corollary}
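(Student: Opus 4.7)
The plan is to reduce the reparameterized objective $J'(\theta)$ to the form already analyzed in Theorem~\ref{thm_gdsm} via a one-to-one change of variables, and then transfer the pointwise optimality conclusion of that theorem. The whole argument rests on two simple observations: (i) the reparameterization $\tau \mapsto t$ is a bijection of the noise-level axis, and (ii) the integrand of $J'$ is, as a function of $(\mathbf{X}_{t_{\text{data}}},\mathbf{X}_t,t)$, literally the same as the integrand of $J$ once expressed in $t$.

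First I would verify the coefficient identity. Under the substitution $\sigma_\tau^2 = \sigma_t^2 - \sigma_{t_{\text{data}}}^2$,
\[
\gamma'(\tau,\sigma_{t_{\text{target}}}) = \frac{\sigma_\tau^2}{\sigma_\tau^2 + \sigma_{t_{\text{data}}}^2 - \sigma_{t_{\text{target}}}^2} = \frac{\sigma_t^2 - \sigma_{t_{\text{data}}}^2}{\sigma_t^2 - \sigma_{t_{\text{target}}}^2} = \gamma(t,\sigma_{t_{\text{target}}}),
\]
and an analogous calculation gives $\delta'(\tau,\sigma_{t_{\text{target}}}) = \delta(t,\sigma_{t_{\text{target}}})$. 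Combined with the recovery formula in Equation~\eqref{eq:recover_t}, this shows that the squared-norm integrand appearing in $J'(\theta)$ agrees with that in $J(\theta)$ whenever $t$ and $\tau$ are linked by the reparameterization.

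Next I would carry out the change of variables in the outer expectation. Because $\sigma_t$ is strictly increasing, $t \mapsto \tau(t) = \sqrt{\sigma_t^2 - \sigma_{t_{\text{data}}}^2}$ is a $C^1$ bijection from $(t_{\text{data}},T]$ onto $(0,T']$ with positive derivative, so uniform sampling of $\tau$ induces a strictly positive density on $t$. Writing the expectation as an iterated integral (legitimate by Fubini, since the Gaussian forward process in Equation~\eqref{eq:forward_process} yields a well-defined joint law of $(\mathbf{X}_{t_{\text{data}}},\mathbf{X}_t)$ given $t$), we obtain
\[
J'(\theta) = \int_{t_{\text{data}}}^{T} w(t)\, \mathbb{E}_{\mathbf{X}_{t_{\text{data}}},\mathbf{X}_t \mid t}\!\left[\left\|\gamma(t,\sigma_{t_{\text{target}}})\mathbf{h}_\theta(\mathbf{X}_t,t) + \delta(t,\sigma_{t_{\text{target}}})\mathbf{X}_t - \mathbf{X}_{t_{\text{data}}}\right\|^2\right] dt,
\]
with $w(t)>0$. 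Theorem~\ref{thm_gdsm} characterizes the minimizer of the inner conditional expectation pointwise in $(\mathbf{X}_t,t)$, and any strictly positive reweighting in $t$ leaves this pointwise minimizer unchanged, yielding $\mathbf{h}_{\theta^*}(\mathbf{X}_t,t) = \mathbb{E}[\mathbf{X}_{t_{\text{target}}}\mid \mathbf{X}_t]$ on the claimed support.

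The one delicate point I anticipate is bookkeeping around the endpoint approximation $T' \approx T$ made in the paper. For the theoretical corollary I would work with the exact reparameterization onto $(0,T']$, so that the bijection $t \leftrightarrow \tau$ is clean; the training-time choice of sampling $\tau$ uniformly on $(0,T]$ only perturbs the weight $w(t)$ near the upper boundary while keeping $w(t)>0$ on $(t_{\text{data}},T]$, so it does not affect the optimal $\mathbf{h}_\theta$. No other step involves nontrivial analysis: the coefficient algebra is mechanical, the change of variables is elementary, and the hard work—namely, identifying the pointwise optimum as the conditional expectation $\mathbb{E}[\mathbf{X}_{t_{\text{target}}}\mid \mathbf{X}_t]$—has already been done in Theorem~\ref{thm_gdsm}.
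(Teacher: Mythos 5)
Your proposal is correct and follows essentially the same route as the paper: both rest on the identity \(\gamma'(\tau,\sigma_{t_{\text{target}}})=\gamma(t,\sigma_{t_{\text{target}}})\), \(\delta'(\tau,\sigma_{t_{\text{target}}})=\delta(t,\sigma_{t_{\text{target}}})\) under \(\sigma_\tau^2=\sigma_t^2-\sigma_{t_{\text{data}}}^2\), combined with the fact that the marginal distribution over the noise level does not affect the pointwise MSE minimizer, which equals \(\mathbb{E}[\mathbf{X}_{t_{\text{data}}}\mid\mathbf{X}_t]\) and, after cancelling \(\gamma'>0\), yields \(\mathbf{h}_{\theta^*}(\mathbf{X}_t,t)=\mathbb{E}[\mathbf{X}_{t_{\text{target}}}\mid\mathbf{X}_t]\). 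The only cosmetic difference is that you change variables back to \(t\) and treat \(J'\) as a positively reweighted version of \(J\) (incidentally assuming slightly more than the paper needs, e.g.\ differentiability of \(\sigma_t\) for the density argument), whereas the paper applies its MSE-minimizer lemma directly to \(J'\) in the \(\tau\) variable, since that lemma is agnostic to how the noise level is sampled.
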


\subsection{Corruption2Self: A Score-based Self-supervised MRI Denoising Framework}

Building upon the Reparameterized Generalized Denoising Score Matching introduced earlier, we propose Corruption2Self (C2S) where the objective is to train a denoising function \(\mathbf{h}_\theta\) that approximates the conditional expectation \(\mathbb{E}\left[ \mathbf{X}_{t_{\text{target}}} \mid \mathbf{X}_t \right]\), where \(\mathbf{X}_{t_{\text{target}}}\) is a less noisy version of \(\mathbf{X}_{t_{\text{data}}}\) with noise level \(\sigma_{t_{\text{target}}} \leq \sigma_{t_{\text{data}}}\).
Denote $\mathbf{D}_\theta(\mathbf{X}_\tau, \tau)$ as the weighted combination of the network output with the input through skip connections:

\begin{equation}
\mathbf{D}_\theta(\mathbf{X}_\tau, \tau) = \lambda_{\text{out}}(\tau, \sigma_{t_{\text{target}}}) \, \mathbf{h}_\theta(\mathbf{X}_t, t) + \lambda_{\text{skip}}(\tau, \sigma_{t_{\text{target}}}) \, \mathbf{X}_t,
\end{equation}

where $\mathbf{X}_t = \mathbf{X}_{t_{\text{data}}} + \sigma_{\tau} \mathbf{Z}$ with $\mathbf{Z} \sim \mathcal{N}(\mathbf{0}, \mathbf{I}_d)$, and $t$ relates to $\tau$ through $t = \sigma_t^{-1}\left( \sqrt{\sigma_\tau^2 + \sigma_{t_{\text{data}}}^2} \right)$. The blending coefficients are defined as:

\begin{equation}
\lambda_{\text{out}}(\tau, \sigma_{t_{\text{target}}}) = \frac{\sigma_{\tau}^2}{\sigma_{\tau}^2 + \sigma_{t_{\text{data}}}^2 - \sigma_{t_{\text{target}}}^2}, \quad 
\lambda_{\text{skip}}(\tau, \sigma_{t_{\text{target}}}) = \frac{\sigma_{t_{\text{data}}}^2 - \sigma_{t_{\text{target}}}^2}{\sigma_{\tau}^2 + \sigma_{t_{\text{data}}}^2 - \sigma_{t_{\text{target}}}^2}.
\end{equation}

When $\sigma_{t_{\text{target}}} = 0$ (maximum denoising), the goal is to predict $\mathbb{E}[\mathbf{X}_0 \mid \mathbf{X}_t]$ and the coefficients simplify to:
$
\lambda_{\text{out}}(\tau, 0) = \sigma_{\tau}^2/(\sigma_{\tau}^2 + \sigma_{t_{\text{data}}}^2),\quad
\lambda_{\text{skip}}(\tau, 0) = \sigma_{t_{\text{data}}}^2/(\sigma_{\tau}^2 + \sigma_{t_{\text{data}}}^2).
$
Our loss function is expressed as:

\begin{equation}
\mathcal{L}_{\text{C2S}}(\theta) = \frac{1}{2} \mathbb{E}_{\mathbf{X}_{t_{\text{data}}} \sim p_{t_{\text{data}}}(\mathbf{x}), \tau \sim \mathcal{U}(0, T]} \left[ 
w(\tau) \left\| \mathbf{D}_\theta(\mathbf{X}_\tau, \tau) - \mathbf{X}_{t_{\text{data}}} \right\|_2^2 \right],
\end{equation}

where $\mathbf{X}_\tau$ is constructed by adding noise to $\mathbf{X}_{t_{\text{data}}}$ according to the reparameterized noise level $\tau$, and \(w(\tau)\) is a weighting function designed to balance the contributions from different noise levels. Following practices from prior works~\cite{song2020score, kingma2021variational, karras2022elucidating}, \(w(\tau)\) can be set to $\left( \sigma_{\tau}^2 + \sigma_{t_{\text{data}}}^2 \right)^\alpha,
\label{eq:weighting_function}$ with \(\alpha\) being a hyperparameter controlling the weighting.

During inference, given a noisy observation $\mathbf{X}_{t_{\text{data}}}$, the denoised output is obtained by:

\begin{equation}
\hat{\mathbf{X}} = \mathbf{h}_{\theta^*}(\mathbf{X}_{t_{\text{data}}}, t_{\text{data}}),
\end{equation}

where the trained model $\mathbf{h}_{\theta^*}$ approximates $\mathbb{E}[\mathbf{X}_0 \mid \mathbf{X}_{t_{\text{data}}}]$, providing a clean estimate of the image.

 While the C2S training procedure effectively approximates \(\mathbb{E}[\mathbf{X}_0 \mid \mathbf{X}_t]\) when $\sigma_{t_{\text{target}}} = 0$, it can lead to oversmoothing. To maintain a balance  between noise reduction and feature preservation, we introduce a \textbf{detail refinement} extension where the network is trained to predict \(\mathbb{E}[\mathbf{X}_{t_{\text{target}}} \mid \mathbf{X}_t]\) with a non-zero target noise level \(\sigma_{t_{\text{target}}} > 0\), allowing the network to retain a controlled amount of noise for preserving finer image textures (details are provided in Appendix~\ref{app:detail_refinement}). As shown in Table~\ref{tab:detail_refinement_validation}, incorporating the detail refinement extension leads to a statistically significant improvement in image quality across contrasts. Our denoising model builds upon the U-Net architecture employed in DDPM~\cite{ho2020denoising}, enhanced with time conditioning and the Noise Variance Conditioned Multi-Head Self-Attention (NVC-MSA) module~\cite{hatamizadeh2023diffit}, which enables the self-attention layers to dynamically adapt to varying noise scales. Empirically, we found that setting the noise schedule function $\sigma_\tau$ equal to the noise level $\tau$ yields good performance. More details regarding the architecture and implementation are provided in Appendix~\ref{app:model_architecture}.

\begin{figure*}[t]
    \begin{minipage}[t]{0.48\textwidth}
        \begin{algorithm}[H]
        \caption{Corruption2Self Training Procedure}
        \label{alg:c2s_training}
        \small
        \begin{algorithmic}[1]
        \Require Noisy dataset {$X_{t_{data}}$}$_{i=1}^N$, $h_\theta$, max noise level $T$, batch size $m$, total iterations $K$, noise schedule function $\sigma_\tau$ 
        \For{$k=1$ to $K$}
            \State Sample minibatch {$X_{t_{data}}$}$_{i=1}^m$, $\tau\sim\mathcal{U}(0,T]$, $Z\sim\mathcal{N}(0,I_d)$
            \State Compute $\lambda_{out}(\tau,0)=\frac{\sigma_\tau^2}{\sigma_\tau^2+\sigma_{t_{data}}^2}$, $\lambda_{skip}(\tau,0)=\frac{\sigma_{t_{data}}^2}{\sigma_\tau^2+\sigma_{t_{data}}^2}$
            \State Recover $t=\sigma_t^{-1}(\sqrt{\sigma_\tau^2+\sigma_{t_{data}}^2})$ \label{eq:recover_t_1}
            \State Compute $X_t\leftarrow X_{t_{data}}+\sigma_\tau Z$
            \State Compute loss: $\mathcal{L}_{\text{C2S}}(\theta) =\frac{1}{2m}\sum_{i=1}^m w(\tau)\|\cdot\|$
            \State Update $\theta$ using Adam optimizer \cite{kingma2014adam} 
        \EndFor
        \end{algorithmic}
        \end{algorithm}
    \end{minipage}
    \hfill
    \begin{minipage}[t]{0.50\textwidth}
        \begin{table}[H]
            \caption{Effectiveness of detail refinement module on the M4Raw validation dataset. Improvements are statistically significant ($*p<0.05$) using paired t-tests.} Additional details are provided in Appendix~\ref{app:detail_refinement}.
            \label{tab:detail_refinement_validation}
            \centering
            \begin{tabular}{lcc}
            \toprule
            \textbf{Contrast} & \textbf{PSNR} & \textbf{p-value} \\
            \midrule
            T1 & $34.56_{{\small \pm0.062}} \rightarrow \mathbf{34.89}_{{\small \pm0.038}}$ & $0.001^*$ \\
            T2 & $33.84_{{\small \pm0.090}} \rightarrow \mathbf{34.07}_{{\small \pm0.121}}$ & $0.001^*$ \\
            FLAIR & $32.44_{{\small \pm0.073}} \rightarrow \mathbf{32.58}_{{\small \pm0.089}}$ & $0.016^*$ \\
            \midrule
            & \textbf{SSIM} & \textbf{p-value} \\
            \midrule
            T1 & $0.885_{{\small \pm0.002}} \rightarrow \mathbf{0.892}_{{\small \pm0.003}}$ & $0.007^*$ \\
            T2 & $0.860_{{\small \pm0.003}} \rightarrow \mathbf{0.867}_{{\small \pm0.002}}$ & $0.003^*$ \\
            FLAIR & $0.812_{{\small \pm0.004}} \rightarrow \mathbf{0.818}_{{\small \pm0.001}}$ & $0.005^*$ \\
            \bottomrule
            \end{tabular}
        \end{table}
    \end{minipage}
\end{figure*}

\section{Results and Discussion}

We first evaluate C2S on the \textbf{M4Raw} dataset, which includes in-vivo MRI data with real noise. The training and validation data use three-repetition-averaged images, while the test data comprises higher-SNR labels, created by averaging six repetitions for T1 and T2, and four for FLAIR. This setup allows us to assess how well denoising methods perform when evaluated on cleaner test data. Table~\ref{tab:quantitative_results_m4raw_invivo} compares the performance of C2S against classical methods (NLM, BM3D), supervised learning (SwinIR, Restormer, Noise2Noise), and self-supervised approaches (Noise2Void, Noise2Self, PUCA, LG-BPN, Noisier2Noise, Recorrupted2Recorrupted).
C2S consistently outperforms other self-supervised methods, achieving the highest PSNR and SSIM across all contrasts. Our base C2S model significantly outperforms existing self-supervised methods across all contrasts, achieving PSNRs of 32.59dB, 32.28dB, and 32.43dB for T1, T2, and FLAIR respectively. With our detail refinement extension, C2S further improves performance to 32.77dB/0.919, 32.33dB/0.890, and 32.51dB/0.876 for T1, T2, and FLAIR contrasts respectively. Notably, recent self-supervised methods like PUCA and LG-BPN, demonstrate lower performance on MRI data. This performance gap can be attributed to the blind-spot architecture design, often leading to information loss and oversmoothed results.

\begin{table}[hbt]
\centering
\resizebox{\columnwidth}{!}{
\begin{tabular}{lccc}
\toprule
\textbf{Methods} & \textbf{T1} & \textbf{T2} & \textbf{FLAIR} \\
 & PSNR / SSIM $\uparrow$ & PSNR / SSIM $\uparrow$ & PSNR / SSIM $\uparrow$ \\
\midrule
\multicolumn{4}{l}{\textit{Classical Non-Learning-Based Methods}} \\
NLM~\cite{froment2014parameter}     
& 31.90 / 0.898 & 31.17 / 0.876 & 32.01 / 0.870 \\
BM3D~\cite{makinen2020collaborative}     
 & 32.07 / 0.903 & 31.20 / 0.877 & 32.14 / 0.873 \\
\midrule
\multicolumn{4}{l}{\textit{Supervised Learning Methods}} \\
SwinIR~\cite{liang2021swinir}      
& 32.53 / 0.913 & 31.90 / 0.891 & 32.15 / 0.885 \\
Restormer~\cite{zamir2022restormer}      
& 32.35 / 0.912 & 31.79 / 0.890 & 32.31 / 0.886 \\
Noise2Noise~\cite{lehtinen2018noise2noise} 
& 32.59 / 0.911 & 32.37 / 0.886 & 32.70 / 0.871 \\
\midrule
\multicolumn{4}{l}{\textit{Self-Supervised Single-Contrast Methods}} \\
Noise2Void~\cite{krull2019noise2void}     
& 31.46 / 0.870 & 30.93 / 0.857 & 31.17 / 0.851 \\
Noise2Self~\cite{batson2019noise2self} 
& 31.72 / 0.887 & 31.18 / 0.873 & 31.72 / 0.870  \\
PUCA~\cite{jang2024puca}
& 30.52 / 0.870 & 29.11 / 0.827 & 29.57 / 0.807  \\
LG-BPN~\cite{wang2023lg}
& 31.15 / 0.890 & 30.66 / 0.868 & 30.82 / 0.862  \\
Noisier2Noise~\cite{moran2020noisier2noise}
& 31.60 / 0.876 & 31.45 / 0.871 & 31.59 / 0.861  \\
Recorrupted2Recorrupted~\cite{pang2021recorrupted}
& 31.67 / 0.876 & 31.33 / 0.870 & 31.57 / 0.863 \\
\rowcolor{RecycledGlass}
\textbf{C2S}
& \underline{32.59} / \underline{0.915} & \underline{32.28} / \underline{0.888} & \underline{32.43} / \underline{0.872} \\
\rowcolor{RecycledGlass}
\textbf{C2S \textit{(w/ Detail Refinement)}}
& \textbf{32.77} / \textbf{0.919} & \textbf{32.33} / \textbf{0.890} & \textbf{32.51} / \textbf{0.876} \\
\bottomrule
\end{tabular}
}
\caption{Quantitative comparison of denoising performance on the M4Raw test dataset. Results show PSNR (dB) and SSIM metrics across T1, T2, and FLAIR contrasts. C2S with detail refinement (bold) achieves the best performance among all self-supervised methods across all contrasts, and even outperforms supervised approaches in some cases. Second-best results among self-supervised methods are underlined.}

\label{tab:quantitative_results_m4raw_invivo}
\end{table}

\begin{figure}[h!]
    \centering
    \includegraphics[width=\textwidth]{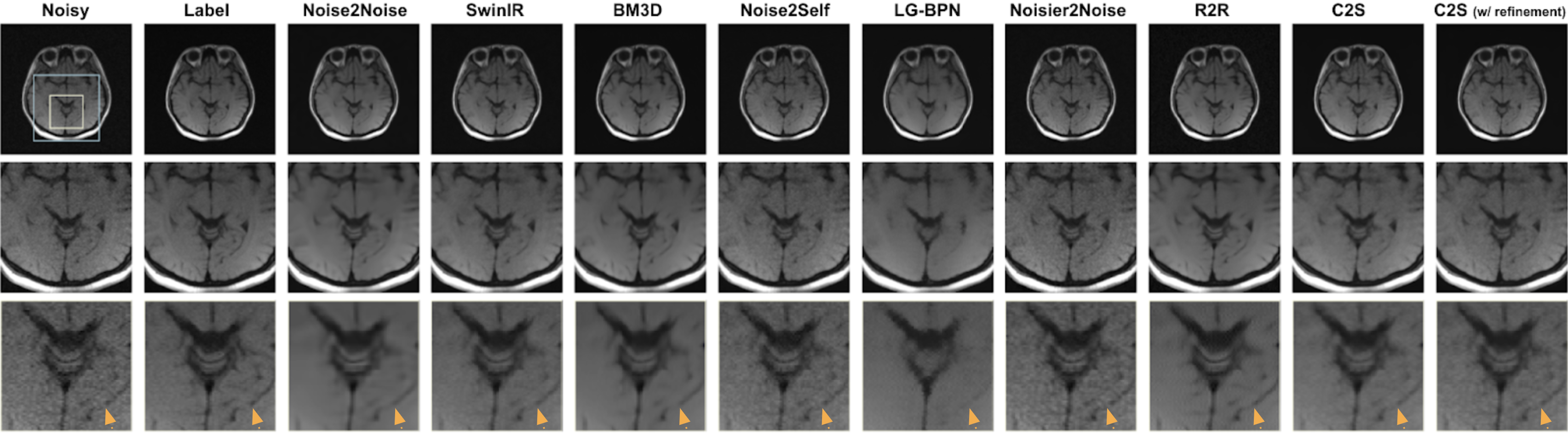}
    \caption{Comparison of different denoising methods for T1 contrast from the M4Raw dataset. }
    \label{fig:M4Raw_T2_comparison}
\end{figure}
An important observation is that supervised methods such as SwinIR and Restormer, trained on three-repetition-averaged labels, do not significantly outperform self-supervised methods on higher-SNR test data. Supervised models typically learn \(\mathbb{E}[\mathbf{X}_{t_{\text{target}}} \mid \mathbf{X}_{t_{\text{data}}}]\), where \(t_{\text{data}} > t_{\text{target}} > 0\) when multi-repetition averaged samples are used as labels. This makes supervised methods less effective at handling shifts to higher-SNR test data. In contrast, C2S approximates \(\mathbb{E}[\mathbf{X}_0 \mid \mathbf{X}_t]\), allowing it to achieve competitive performance on test data. Empirical results on test labels (three-repetition-average) matching the SNR of the training data (presented in Appendix~\ref{sec:additional_results_m4raw}) show that supervised methods like SwinIR and Restormer perform better when the noise characteristics of the training and test data are similar. 


\begin{table}[hbt]
\centering
\resizebox{\columnwidth}{!}{%
\begin{tabular}{lcccc}
\toprule
\textbf{Methods} & \textbf{PD}, $\sigma = 13/255$ & \textbf{PD}, $\sigma = 25/255$ & \textbf{PDFS}, $\sigma = 13/255$ & \textbf{PDFS}, $\sigma = 25/255$ \\
 & PSNR / SSIM $\uparrow$ & PSNR / SSIM $\uparrow$ & PSNR / SSIM $\uparrow$ & PSNR / SSIM $\uparrow$ \\
\midrule 
\multicolumn{5}{l}{\textit{Classical Non-Learning-Based Methods}} \\
NLM     
& 30.40 / 0.772 & 21.63 / 0.327 & 28.82 / 0.726 & 21.12 / 0.350 \\
BM3D     
& 33.16 / 0.829 & 30.58 / 0.755 & 30.64 / 0.705 & 28.49 / 0.592 \\
\midrule
\multicolumn{5}{l}{\textit{Supervised Learning Methods}} \\
Noise2True (SwinIR)      
& 34.44 / 0.868 & 32.39 / 0.820 & 31.35 / 0.774 & 29.55 / 0.665 \\
Noise2True (U-Net)     
& 34.54 / 0.870 & 32.61 / 0.825 & 31.39 / 0.775 & 29.62 / 0.669 \\
Noise2Noise~\cite{lehtinen2018noise2noise} 
& 34.06 / 0.854 & 30.83 / 0.769 & 31.33 / 0.773 & 29.12 / 0.654 \\
\midrule
\multicolumn{5}{l}{\textit{Self-Supervised Methods}} \\
Noise2Void~\cite{krull2019noise2void}     
& 32.19 / 0.804 & 29.79 / 0.706 & 29.50 / 0.629 & 27.99 / 0.558 \\
Noise2Self~\cite{batson2019noise2self}     
& 32.47 / 0.808 & 30.60 / \underline{0.757} & 29.32 / 0.613 & 28.31 / 0.563 \\
PUCA~\cite{jang2024puca}
& 31.03 / 0.771 & 29.88 / 0.740 & 28.65 / 0.594 & 27.54 / 0.527  \\
LG-BPN~\cite{wang2023lg}
& 31.15 / 0.776 & 30.32 / 0.751 & 29.14 / 0.603 & 27.77 / 0.535 \\
Noisier2Noise~\cite{moran2020noisier2noise}     
& 33.18 / 0.807 & 30.35 / 0.741 & 30.39 / 0.683 & 27.83 / 0.559 \\
Recorrupted2Recorrupted~\cite{pang2021recorrupted}  
& 33.29 / 0.810 & 30.52 / 0.745 & \textbf{30.95} / \underline{0.752} & 28.32 / 0.561 \\
\rowcolor{RecycledGlass}
\textbf{C2S}
& \underline{33.36} / \underline{0.831} & \underline{30.62} / 0.753 & 30.72 / 0.750 & \underline{28.58} / \underline{0.592} \\
\rowcolor{RecycledGlass}
\textbf{C2S \textit{(w/ Detail Refinement)}}
& \textbf{33.48} / \textbf{0.832} & \textbf{30.67} / \textbf{0.761} & \underline{30.91} / \textbf{0.756} & \textbf{28.62} / \textbf{0.601} \\
\bottomrule
\end{tabular}
}
\caption{Quantitative evaluation on the fastMRI test dataset with simulated noise at two different levels ($\sigma = 13/255$ and $\sigma = 25/255$) across PD and PDFS contrasts.}
\label{tab:fastMRItable}
\end{table}

\begin{figure}[h!]
    \centering
    \includegraphics[width=\textwidth]{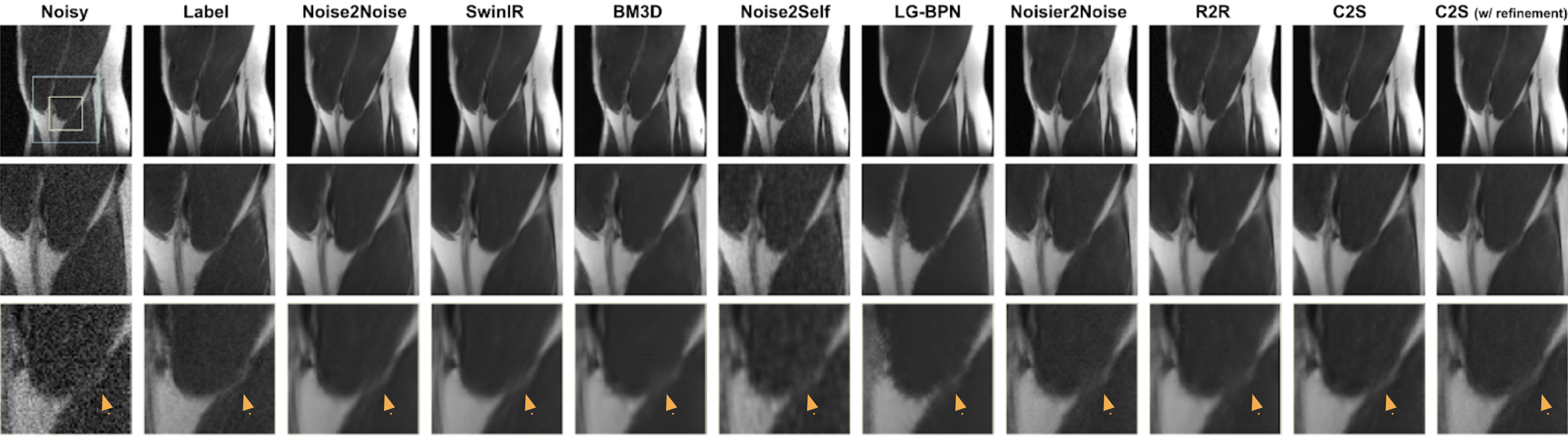}
    \caption{Comparison of denoising methods for the PD contrast (\(\sigma = 13/255\)) from the fastMRI dataset.}

    \label{fig:two_row_fastmri_PD_comparison}
\end{figure}
To further evaluate the robustness of C2S under different noise levels, we conducted experiments on the \textbf{fastMRI} dataset~\cite{zbontar2018fastmri}, simulating Gaussian noise with \(\sigma = 13/255\) and \(\sigma = 25/255\). As shown in Table~\ref{tab:fastMRItable}, the same baseline methods are analyzed and C2S consistently achieves the best or comparable results among self-supervised methods. 
On PDFS with \(\sigma = 13/255\), Recorrupted2Recorrupted achieves a slightly higher PSNR (30.95~dB vs. 30.91~dB); however, C2S records the highest SSIM (0.756), indicating better detail preservation.
It is worth noting that although the labels in this simulated dataset do not have added synthetic noise, they still contain inherent noise typical in MRI, albeit with higher SNR. Figure ~\ref{fig:two_row_fastmri_PD_comparison} demonstrates that our method balances feature preservation and noise removal, resulting in much cleaner visual representations compared to other methods. For additional results on fastMRI, refer to Appendix~\ref{sec:additional_results_fastmri}.


We assessed the effect of reparameterization on training stability and performance by mapping the noise levels \(\tau \in (0, T]\) to a new scale for more uniform sampling. As shown in Table~\ref{tab:ablation_reparam_m4raw}, reparameterization improves PSNR and SSIM across all contrasts. The training dynamics, illustrated in Figure~\ref{fig:reparam_compare_FLAIR}, confirm the stabilizing effect of reparameterization. The model with reparameterization (blue) shows smoother and faster convergence than the model without it (orange), which fluctuates more and converges slower.

\begin{table}[ht!]
    \centering
    \begin{subtable}{0.58\textwidth}
        \centering
        \scriptsize 
        \setlength{\tabcolsep}{3pt} 
        \resizebox{\linewidth}{!}{%
        \begin{tabular}{lcc|cc|cc}
        \toprule
        \multirow{2}{*}{Method} & \multicolumn{2}{c|}{\textbf{T1}}  & \multicolumn{2}{c|}{\textbf{T2}} & \multicolumn{2}{c}{\textbf{FLAIR}} \\
        & PSNR $\uparrow$ & SSIM $\uparrow$ & PSNR $\uparrow$ & SSIM $\uparrow$ & PSNR $\uparrow$ & SSIM $\uparrow$ \\
        \midrule
        Without Reparam. & 31.14 & 0.837 & 30.53 & 0.807 & 30.43 & 0.771 \\
        \rowcolor{RecycledGlass}
        With Reparam. & \textbf{34.43} & \textbf{0.882} & \textbf{33.82} & \textbf{0.860} & \textbf{32.56} & \textbf{0.814} \\
        \bottomrule
        \end{tabular}
        }
        \caption{Impact of reparameterization of noise levels on the M4Raw dataset. Results are validation results obtained after training for 200 epochs.}
        \label{tab:ablation_reparam_m4raw}
    \end{subtable}
    \hfill
    \begin{subtable}{0.40\textwidth}
        \centering
        \scriptsize 
        \setlength{\tabcolsep}{3pt} 
        \resizebox{\linewidth}{!}{%
        \begin{tabular}{lcc|cc}
        \toprule
        \multirow{2}{*}{Architecture} & \multicolumn{2}{c|}{\textbf{M4Raw}} & \multicolumn{2}{c}{\textbf{fastMRI}} \\
        & PSNR $\uparrow$ & SSIM $\uparrow$ & PSNR $\uparrow$ & SSIM $\uparrow$ \\
        \midrule
        U-Net & 33.11 & 0.865 & 32.32 & 0.807 \\
        DDPM & 34.82 & 0.886 & 33.48 & 0.835 \\
        \rowcolor{RecycledGlass}
        \textbf{Ours} & \textbf{34.91} & \textbf{0.890} & \textbf{33.63} & \textbf{0.837} \\
        \bottomrule
        \end{tabular}
        }
        \caption{Influence of model architecture on the M4Raw dataset (T1) and fastMRI dataset (PD).}
        \label{tab:ablation_architecture}
    \end{subtable}
    \caption{Ablation studies on the impact of reparameterization and model architecture.}
    \label{tab:ablation_studies}
\end{table}

We evaluated the impact of different architectural choices on the performance of our denoising model. As shown in Table~\ref{tab:ablation_architecture}, incorporating time conditioning significantly improves both PSNR and SSIM across the M4Raw (T1 contrast) and fastMRI (PD contrast, noise level 13/255) datasets. The best performance is achieved by further incorporating the NVC-MSA module in our model (Appendix~\ref{app:model_architecture}), which allows the model to dynamically adapt to varying noise levels by integrating noise variance into the self-attention mechanism. 

Our approach demonstrates strong robustness to noise level estimation errors, making it suitable for practical applications where exact noise levels may be unknown. Through extensive experiments detailed in Appendix~\ref{sec:appendix_noise_estimation_ablation}, we show that C2S maintains stable performance even with significant estimation errors (±50\% of the true noise level). This robustness, combined with the incorporation of standard noise estimation techniques (e.g., from the \texttt{skimage} package ~\cite{van2014scikit}), enables our method to effectively function as a blind denoising model. In practice, we find such noise estimation tools provide sufficiently accurate estimates for optimal model performance, alleviating the need for precise noise level knowledge. More quantitative results and analysis of the effect of noise estimation error are provided in Appendix~\ref{sec:appendix_noise_estimation_ablation}.

\paragraph{Extending C2S to Multi-Contrast Settings} 
MRI typically involves acquiring multiple contrasts to provide comprehensive diagnostic information. By leveraging complementary information from different contrasts, denoising performance can be further enhanced. To capitalize on this, we extend the C2S framework to multi-contrast settings by incorporating additional MRI contrasts as inputs. Figure~\ref{fig:Multi_Contrast_M4Raw_T1_vis_two_row_0} demonstrates the visual comparison of different denoising methods for the T1 contrast on the M4Raw dataset. It is evident that using multi-contrast inputs (T1 \& T2, T1 \& FLAIR) allows for better structural preservation and more detailed reconstructions compared to single-contrast denoising techniques. Quantitatively, Table~\ref{tab:quantitative_results_m4raw_invivo_multi-contrast_summary} shows that multi-contrast C2S consistently outperforms classical BM3D, supervised Noise2Noise, and single-contrast C2S in terms of PSNR and SSIM. More details can be found in Appendix~\ref{app:multicontrast_c2s}.

\begin{table}[hbt]
\centering
\resizebox{\columnwidth}{!}{
\begin{tabular}{lcccccc}
\toprule
\textbf{Target Contrast} & \textbf{Best Classical} & \textbf{Best Supervised} & \textbf{Best Self-Supervised} & \multicolumn{3}{c}{\textbf{Multi-Contrast C2S}} \\
 PSNR / SSIM $\uparrow$ & BM3D & Noise2Noise  & C2S & T1 \& T2 & FLAIR \& T1 & T2 \& FLAIR \\
\midrule
T1 & 32.07 / 0.903 & 32.59 / 0.911 & 32.77 / 0.919 & \underline{33.57} / \underline{0.921} & \textbf{33.89 / 0.922} & N/A \\
T2 & 31.20 / 0.877 & 32.37 / 0.886 & 32.33 / 0.890 & \underline{33.01} / \underline{0.895} & N/A & \textbf{33.36} / \textbf{0.901} \\
FLAIR & 32.14 / 0.873 & \underline{32.70} / 0.871 & 32.51 / 0.876 & N/A & \textbf{32.71} / \textbf{0.879} & 32.62 / \underline{0.877}\\
\bottomrule
\end{tabular}
}
\caption{Multi-contrast denoising results on the M4Raw dataset. For the multi-contrast C2S results, entries such as "T1 \& T2" indicate that T1 and T2 contrasts were used as inputs for denoising the target contrast.}
\label{tab:quantitative_results_m4raw_invivo_multi-contrast_summary}
\end{table}

\vspace{-10pt}    

\begin{figure}[h!]
    \centering
    \includegraphics[width=\textwidth]{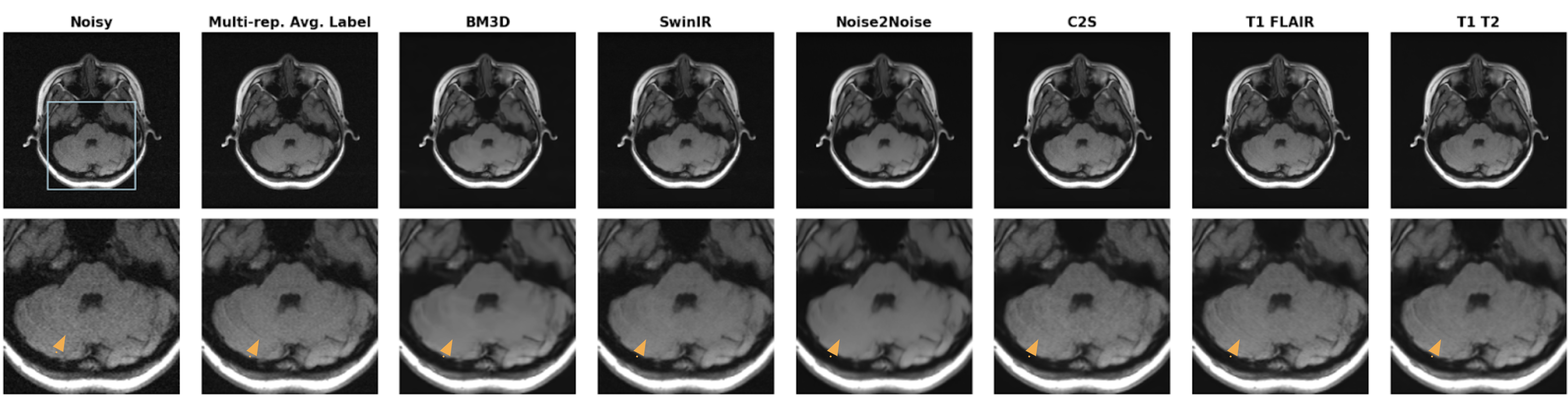}
    \caption{Comparison of different denoising methods for T1 contrast in the M4Raw dataset. The figure showcases Noisy, BM3D, Noise2Noise, and C2S, along with multi-contrast C2S variants (T1 \& T2, T1 \& FLAIR). Multi-contrast C2S preserves more structural details and produces sharper reconstructions.}
    \label{fig:Multi_Contrast_M4Raw_T1_vis_two_row_0}
\end{figure}


\section{Conclusion}

We have introduced Corruption2Self, a score-based self-supervised denoising framework tailored for MRI applications. By extending denoising score matching to the ambient noise setting through our Generalized Denoising Score Matching approach, C2S enables effective learning directly from noisy observations without the need for clean labels. Our method incorporates a reparameterization of noise levels to stabilize training and enhance convergence, as well as a detail refinement extension to balance noise reduction with the preservation of fine spatial features. By extending C2S to multi-contrast settings, we further leverage complementary information across different MRI contrasts, leading to enhanced denoising performance. Notably, C2S exhibits superior robustness across varying noise conditions and MRI contrasts, highlighting its potential for broader applicability in clinical settings.

\section*{Acknowledgements}
This research was partially supported by the following fundings: NSF-CBET-1944249 and NIH-R35GM142969. The authors would also like to thank Ruiyang Zhao and Yizun Wang for the helpful discussions.

\bibliography{iclr2025_conference}
\bibliographystyle{iclr2025_conference}
\clearpage
\appendix
\section{Additional Experimental Details}
\label{sec:additional_experimental_details}
\subsection{Datasets}
\label{sec:datasets_details}
We evaluated our Corruption2Self (C2S) method on two distinct datasets:

\textbf{In-vivo Dataset (M4Raw):} The M4Raw dataset~\cite{lyu2023m4raw} contains multi-channel k-space images across three contrasts (T1-weighted, T2-weighted, and FLAIR) from 183 participants. The dataset was partitioned into training (128 individuals, 6,912 slices), validation (30 individuals, 1,620 slices), and testing (25 individuals, 1,350 slices) subsets. For training and validation, pseudo-ground truth labels were generated by averaging three repetitions for T1/T2-weighted contrasts and two repetitions for FLAIR. Test data utilized higher-SNR labels derived from averaging six repetitions (T1/T2-weighted) and four repetitions (FLAIR), enabling assessment of model generalization to cleaner data. The dataset (20.7GB) is distributed under CC-BY-4.0 license.

\textbf{Simulated Dataset (fastMRI):} We utilized single-coil knee data from the fastMRI dataset~\cite{zbontar2018fastmri}, selecting patient entries matching those in MINet~\cite{feng2021multi} to ensure contrast correspondence. The dataset was structured into training (180 pairs, 6,648 slices), validation (47 pairs, 1,684 slices), and testing (45 pairs, 1,665 slices) subsets. White Gaussian noise with $\sigma \in [13,25]$ was introduced to simulate real-world MRI noise patterns. The dataset (31.5GB) is distributed under MIT license.

\subsection{Training Parameters}
\label{sec:train_param}
All experiments were conducted on NVIDIA A6000 GPUs. For the M4Raw dataset, optimization was performed using Adam with learning rate $1 \times 10^{-4}$ and weight decay $1 \times 10^{-4}$. For the fastMRI dataset, Adam was configured with learning rate $1 \times 10^{-4}$ and weight decay $5 \times 10^{-2}$. Critical hyperparameters (learning rate, weight decay, batch size, maximum noise level $T$) were optimized based on validation performance. Early stopping was implemented to prevent overfitting, and final models were selected based on optimal validation metrics before test set evaluation.

\subsection{Evaluation Protocol}
\label{sec:evaluation_protocol}

To assess the efficacy and robustness of the Corruption2Self (C2S) framework, we employed the following evaluation protocol:

\textbf{Pseudo-Ground Truth Generation:} For the in-vivo dataset (M4Raw), pseudo-ground truth labels were generated by averaging the multi-repetition images.

\textbf{Image Quality Metrics:} The quality of the denoised images was quantified using two metrics:
\begin{itemize}
    \item \textbf{Peak Signal-to-Noise Ratio (PSNR):} This metric measures the ratio between the maximum possible power of a signal and the power of corrupting noise, assessing the quality of the denoised image against the reference label.
    \item \textbf{Structural Similarity Index Measure (SSIM):} This metric evaluates the structural similarity between the denoised image and the reference, focusing on luminance, contrast, and structure.
\end{itemize}

\section{Theoretical Results}

\begin{lemma}
\label{lemma:MSE_minimizer_with_J}
Given the objective function \( J(\theta) \):
\[
    J(\theta) = \mathbb{E}_{\mathbf{X}_{t_{\text{data}}}} \left[ \mathbb{E}_{t} \left[ \mathbb{E}_{\mathbf{X}_t \mid \mathbf{X}_{t_{\text{data}}}} \left[ \left\| \mathbf{g}_\theta(\mathbf{X}_t, t) - \mathbf{X}_{t_{\text{data}}} \right\|^2 \right] \right] \right],
\]
where \( \mathbf{X}_t \sim \mathcal{N}(\mathbf{X}_{t_{\text{data}}}, \sigma^2(t) \mathbf{I}) \), the function \( \mathbf{g}_{\theta^*}(\mathbf{X}_t, t) \) that minimizes \( J(\theta) \) is the conditional expectation:
\[
\mathbf{g}_{\theta^*}(\mathbf{X}_t, t) = \mathbb{E}\left[\mathbf{X}_{t_{\text{data}}} \mid \mathbf{X}_t\right].
\]
\end{lemma}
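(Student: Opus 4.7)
The plan is to reduce the nested objective to a pointwise minimization and then invoke the standard fact that the $L^2$ minimizer of a random variable is its expectation. First I would reinterpret the triple expectation in $J(\theta)$ as an expectation with respect to the joint distribution of $(t, \mathbf{X}_{t_{\text{data}}}, \mathbf{X}_t)$, where $\mathbf{X}_{t_{\text{data}}} \sim p_{t_{\text{data}}}$, $t$ is independently sampled from its prior, and $\mathbf{X}_t \mid \mathbf{X}_{t_{\text{data}}}, t \sim \mathcal{N}(\mathbf{X}_{t_{\text{data}}}, \sigma^2(t)\mathbf{I})$. Assuming the usual integrability conditions (finite second moments of $\mathbf{X}_{t_{\text{data}}}$ and $\mathbf{g}_\theta$), Fubini's theorem lets me re-factor the joint expectation by conditioning instead on $(\mathbf{X}_t, t)$:
\[
J(\theta) = \mathbb{E}_{t}\!\left[ \mathbb{E}_{\mathbf{X}_t}\!\left[ \mathbb{E}_{\mathbf{X}_{t_{\text{data}}} \mid \mathbf{X}_t, t}\!\left[ \left\| \mathbf{g}_\theta(\mathbf{X}_t, t) - \mathbf{X}_{t_{\text{data}}} \right\|^2 \right] \right] \right].
\]

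Next I would argue that $J(\theta)$ is minimized pointwise: since $\mathbf{g}_\theta(\mathbf{X}_t, t)$ depends only on $(\mathbf{X}_t, t)$, the inner conditional expectation can be treated as a deterministic function of that pair, and minimizing $J(\theta)$ over $\theta$ is equivalent (given a sufficiently expressive function class) to minimizing the integrand
\[
\mathbb{E}\!\left[ \left\| \mathbf{u} - \mathbf{X}_{t_{\text{data}}} \right\|^2 \,\Big|\, \mathbf{X}_t, t \right]
\]
over $\mathbf{u} \in \mathbb{R}^d$ separately for each $(\mathbf{X}_t, t)$. Applying the elementary bias–variance identity $\mathbb{E}[\|\mathbf{u} - \mathbf{Y}\|^2] = \|\mathbf{u} - \mathbb{E}[\mathbf{Y}]\|^2 + \Tr\,\Cov(\mathbf{Y})$ to the conditional distribution $\mathbf{Y} = \mathbf{X}_{t_{\text{data}}} \mid \mathbf{X}_t, t$ shows that the unique minimizer in $\mathbf{u}$ is $\mathbb{E}[\mathbf{X}_{t_{\text{data}}} \mid \mathbf{X}_t, t]$.

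Finally I would collect these pointwise minimizers into a function of $(\mathbf{X}_t, t)$, yielding $\mathbf{g}_{\theta^*}(\mathbf{X}_t, t) = \mathbb{E}[\mathbf{X}_{t_{\text{data}}} \mid \mathbf{X}_t, t]$; under the usual Markov assumption that $t$ is known implicitly once $\mathbf{X}_t$ is fixed (or by absorbing $t$ into the conditioning as in the lemma's statement), this simplifies to $\mathbb{E}[\mathbf{X}_{t_{\text{data}}} \mid \mathbf{X}_t]$.

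There is no deep obstacle here: the calculation is essentially the textbook MMSE argument applied in a conditional form. The only real subtleties worth flagging in the write-up are (i) checking integrability so that Fubini and the pointwise minimization are justified, and (ii) the implicit non-parametric assumption that the parameterization $\{\mathbf{g}_\theta\}$ is rich enough to realize the conditional expectation; if this richness fails, the conclusion should be stated as a characterization of the minimizer over the function class rather than as a pointwise equality.
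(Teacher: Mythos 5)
Your proposal is correct and follows essentially the same route as the paper: both reduce the objective to a pointwise minimization conditioned on $(\mathbf{X}_t, t)$ and invoke the standard MMSE fact that the conditional expectation minimizes the expected squared error. Your write-up is simply more explicit than the paper's, spelling out the Fubini re-conditioning (from $\mathbf{X}_t \mid \mathbf{X}_{t_{\text{data}}}$ to $\mathbf{X}_{t_{\text{data}}} \mid \mathbf{X}_t$), the bias--variance identity, and the expressiveness/integrability caveats that the paper leaves implicit.
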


\begin{proof}
Our goal is to find \( \mathbf{g}_{\theta^*}(\mathbf{X}_t, t) \) that minimizes the objective function \( J(\theta) \). Since the outer expectations over \( \mathbf{X}_{t_{\text{data}}} \) and \( t \) do not affect the minimization with respect to \( \theta \), we focus on minimizing the inner expectation:
\[
\mathbb{E}_{\mathbf{X}_t \mid \mathbf{X}_{t_{\text{data}}}} \left[ \left\| \mathbf{g}_\theta(\mathbf{X}_t, t) - \mathbf{X}_{t_{\text{data}}} \right\|^2 \right].
\]

For fixed \( \mathbf{X}_t \) and \( t \), the optimal function \( \mathbf{g}_{\theta^*}(\mathbf{X}_t, t) \) minimizes the expected squared error:
\[
\min_{\mathbf{g}_\theta} \, \mathbb{E}_{\mathbf{X}_{t_{\text{data}}} \mid \mathbf{X}_t} \left[ \left\| \mathbf{g}_\theta(\mathbf{X}_t, t) - \mathbf{X}_{t_{\text{data}}} \right\|^2 \right].
\]

According to estimation theory, the function that minimizes this expected squared error is the conditional expectation of \( \mathbf{X}_{t_{\text{data}}} \) given \( \mathbf{X}_t \):
\[
\mathbf{g}_{\theta^*}(\mathbf{X}_t, t) = \mathbb{E}\left[\mathbf{X}_{t_{\text{data}}} \mid \mathbf{X}_t\right].
\]

Therefore, the function \( \mathbf{g}_{\theta^*}(\mathbf{X}_t, t) = \mathbb{E}\left[\mathbf{X}_{t_{\text{data}}} \mid \mathbf{X}_t\right] \) minimizes the objective function \( J(\theta) \).
\end{proof}

\begin{theorem}[Generalized Denoising Score Matching; restated Theorem~\ref{thm_gdsm}]
\label{thm_gdsm_proof}

Let the following assumptions hold:
\begin{enumerate}
    \item \textbf{Data Distribution}: The clean data vector $\mathbf{X}_0 \in \mathbb{R}^d$ is distributed according to $p_0(\mathbf{x}_0)$.
    \item \textbf{Noise Level Functions}: The noise level or schedule functions $\sigma_t$ are strictly positive, scalar-valued functions of time $t$, and are monotonically increasing with respect to $t$.
    \item \textbf{Noisy Observations at Data Noise Level $t_{\text{data}}$}: The observed data $\mathbf{X}_{t_{\text{data}}}$ is given by
    $\mathbf{X}_{t_{\text{data}}} = \mathbf{X}_0 + \sigma_{t_{\text{data}}} \mathbf{Z}_{t_{\text{data}}}$,
    where $\mathbf{Z}_{t_{\text{data}}} \sim \mathcal{N}(\mathbf{0}, \mathbf{I}_d)$.
    \item \textbf{Target Noise Level $t_{\text{target}} \leq t_{\text{data}}$}: The target noisy data $\mathbf{X}_{t_{\text{target}}}$ is defined as
    $\mathbf{X}_{t_{\text{target}}} = \mathbf{X}_0 + \sigma_{t_{\text{target}}} \mathbf{Z}_{t_{\text{target}}}$,
    where $\mathbf{Z}_{t_{\text{target}}} \sim \mathcal{N}(\mathbf{0}, \mathbf{I}_d)$.
    \item \textbf{Higher Noise Levels $t \geq t_{\text{data}}$}: For any $t \geq t_{\text{data}}$, the noisy data $\mathbf{X}_t$ is given by
    $\mathbf{X}_t = \mathbf{X}_0 + \sigma_t \mathbf{Z}_t$,
    where $\mathbf{Z}_t \sim \mathcal{N}(\mathbf{0}, \mathbf{I}_d)$.
    \item \textbf{Function Class Expressiveness}: The neural network class $\{\mathbf{h}_\theta\}$ is sufficiently expressive, satisfying the universal approximation property.
\end{enumerate}

Define the objective function:
\begin{equation}
    J(\theta) = \mathbb{E}_{\mathbf{X}_{t_{\text{data}}}, t, \mathbf{X}_t} \left[ \left\| \gamma(t, \sigma_{t_{\text{target}}}) \, \mathbf{h}_\theta(\mathbf{X}_t, t) + \delta(t, \sigma_{t_{\text{target}}}) \, \mathbf{X}_t - \mathbf{X}_{t_{\text{data}}} \right\|^2 \right],
    \label{eq:general_loss}
\end{equation}
where \(t\) is uniformly sampled from \((t_{\text{data}}, T]\) and
\[
\gamma(t, \sigma_{t_{\text{target}}}) = \frac{\sigma_t^2 - \sigma_{t_{\text{data}}}^2}{\sigma_t^2 - \sigma_{t_{\text{target}}}^2}, \quad \delta(t, \sigma_{t_{\text{target}}}) = \frac{\sigma_{t_{\text{data}}}^2 - \sigma_{t_{\text{target}}}^2}{\sigma_t^2 - \sigma_{t_{\text{target}}}^2}.
\]

Given $\mathbf{X}_{t_{\text{data}}}$ and $t \geq t_{\text{data}}$, $\mathbf{X}_t$ is sampled as:
\begin{equation}
    \mathbf{X}_t = \mathbf{X}_{t_{\text{data}}} + \sqrt{\sigma_t^2 - \sigma_{t_{\text{data}}}^2} \mathbf{Z}, \quad \text{where} \quad \mathbf{Z} \sim \mathcal{N}(\mathbf{0}, \mathbf{I}_d).
\end{equation}

Then, the minimizer $\theta^*$ of $J(\theta)$ satisfies:
\begin{equation}
    \mathbf{h}_{\theta^*}(\mathbf{X}_t, t) = \mathbb{E}[\mathbf{X}_{t_{\text{target}}} \mid \mathbf{X}_t], \quad \forall \mathbf{X}_t \in \mathbb{R}^d, \ t \geq t_{\text{data}}.
\end{equation}
\end{theorem}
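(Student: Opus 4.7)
The plan is to reduce $J(\theta)$ to a pointwise MMSE problem, then apply Tweedie's formula at both the data and target noise levels against the common marginal $p_t$, and finally verify the claim algebraically. First I would follow the logic of Lemma~\ref{lemma:MSE_minimizer_with_J}: since $\gamma$ and $\delta$ do not depend on $\theta$, the outer expectations over $\mathbf{X}_{t_{\text{data}}}$ and $t$ commute with the minimization, and for each fixed $(\mathbf{X}_t,t)$ the best predictor of $\mathbf{X}_{t_{\text{data}}}$ in the quadratic loss is $\mathbb{E}[\mathbf{X}_{t_{\text{data}}}\mid\mathbf{X}_t]$. Using the universal expressiveness assumption to realize any measurable function, this yields the pointwise optimality condition
\[
\gamma(t,\sigma_{t_{\text{target}}})\,\mathbf{h}_{\theta^*}(\mathbf{X}_t,t)+\delta(t,\sigma_{t_{\text{target}}})\,\mathbf{X}_t \;=\; \mathbb{E}\bigl[\mathbf{X}_{t_{\text{data}}}\,\big|\,\mathbf{X}_t\bigr].
\]

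Next I would apply Tweedie's formula twice at the common marginal $p_t(\mathbf{x}_t)$. The density $p_t$ is the convolution of $p_0$ with an isotropic Gaussian of variance $\sigma_t^2$, and it equally factors as $p_{t_{\text{data}}}*\mathcal{N}(\mathbf{0},(\sigma_t^2-\sigma_{t_{\text{data}}}^2)\mathbf{I})$ and as $p_{t_{\text{target}}}*\mathcal{N}(\mathbf{0},(\sigma_t^2-\sigma_{t_{\text{target}}}^2)\mathbf{I})$. Under the natural variance-exploding coupling in which each residual noise is independent of its cleaner endpoint, Tweedie then yields, with $s(\mathbf{X}_t):=\nabla_{\mathbf{x}_t}\log p_t(\mathbf{X}_t)$ the same score in both identities,
\[
\mathbb{E}[\mathbf{X}_{t_{\text{data}}}\mid\mathbf{X}_t]=\mathbf{X}_t+(\sigma_t^2-\sigma_{t_{\text{data}}}^2)\,s(\mathbf{X}_t),\qquad \mathbb{E}[\mathbf{X}_{t_{\text{target}}}\mid\mathbf{X}_t]=\mathbf{X}_t+(\sigma_t^2-\sigma_{t_{\text{target}}}^2)\,s(\mathbf{X}_t).
\]

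The final step is to substitute both Tweedie identities into the pointwise optimality condition and check the proposed minimizer. Directly from the definitions one has $\gamma+\delta=1$ and $\gamma(\sigma_t^2-\sigma_{t_{\text{target}}}^2)=\sigma_t^2-\sigma_{t_{\text{data}}}^2$, so setting $\mathbf{h}_{\theta^*}(\mathbf{X}_t,t)=\mathbb{E}[\mathbf{X}_{t_{\text{target}}}\mid\mathbf{X}_t]$ makes the left-hand side collapse to $\mathbf{X}_t+(\sigma_t^2-\sigma_{t_{\text{data}}}^2)s(\mathbf{X}_t)$, matching the right-hand side. Since $\gamma>0$ for $t>t_{\text{data}}$, this identification is unique, yielding the claimed minimizer for every $\mathbf{X}_t$ and every $t\in(t_{\text{data}},T]$.

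The main obstacle I anticipate is nailing down the joint distribution of $(\mathbf{X}_{t_{\text{target}}},\mathbf{X}_t)$ precisely enough that the target-level Tweedie step produces the factor $\sigma_t^2-\sigma_{t_{\text{target}}}^2$ rather than $\sigma_t^2$. The statement explicitly constructs $\mathbf{X}_t$ only from $\mathbf{X}_{t_{\text{data}}}$, while $\mathbf{X}_{t_{\text{target}}}$ is introduced via its own fresh noise $\mathbf{Z}_{t_{\text{target}}}$; under a literal fresh-independent-noise reading, $\mathbb{E}[\mathbf{X}_{t_{\text{target}}}\mid\mathbf{X}_t]$ would collapse to $\mathbb{E}[\mathbf{X}_0\mid\mathbf{X}_t]$ and the announced coefficients would not match. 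The cleanest resolution is to note that the conditional expectation only depends on the pairwise joint law, and there is a canonical Markov coupling of $(\mathbf{X}_{t_{\text{target}}},\mathbf{X}_t)$ with the correct marginals, namely $\mathbf{X}_t=\mathbf{X}_{t_{\text{target}}}+\sqrt{\sigma_t^2-\sigma_{t_{\text{target}}}^2}\,\mathbf{Z}$ with $\mathbf{Z}$ independent of $\mathbf{X}_{t_{\text{target}}}$, under which Tweedie gives exactly the needed expression.
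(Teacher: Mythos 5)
Your proposal is correct and follows essentially the same route as the paper's proof: reduce $J(\theta)$ to a pointwise MMSE problem via the lemma, apply Tweedie's formula twice against the common marginal $p_t$ to relate $\mathbb{E}[\mathbf{X}_{t_{\text{data}}}\mid\mathbf{X}_t]$ and $\mathbb{E}[\mathbf{X}_{t_{\text{target}}}\mid\mathbf{X}_t]$ through the shared score, and cancel $\gamma>0$. Your closing remark about needing the Markov coupling $\mathbf{X}_t=\mathbf{X}_{t_{\text{target}}}+\sqrt{\sigma_t^2-\sigma_{t_{\text{target}}}^2}\,\mathbf{Z}$ (rather than a literally fresh, independent $\mathbf{Z}_{t_{\text{target}}}$) is a valid observation about a point the paper passes over silently when it writes the two decompositions of $\mathbf{X}_t$, and your resolution matches the paper's implicit intent.
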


\begin{proof}

Our goal is to find \(\theta^*\) that minimizes \(J(\theta)\). Note that \(\mathbf{X}_t\) can be expressed in terms of both \(\mathbf{X}_{t_{\text{data}}}\) and \(\mathbf{X}_{t_{\text{target}}}\):
\begin{align*}
\mathbf{X}_t &= \mathbf{X}_{t_{\text{data}}} + \sqrt{\sigma_t^2 - \sigma_{t_{\text{data}}}^2} \, \mathbf{Z}_1, \quad \mathbf{Z}_1 \sim \mathcal{N}(\mathbf{0}, \mathbf{I}_d), \\
\mathbf{X}_t &= \mathbf{X}_{t_{\text{target}}} + \sqrt{\sigma_t^2 - \sigma_{t_{\text{target}}}^2} \, \mathbf{Z}_2, \quad \mathbf{Z}_2 \sim \mathcal{N}(\mathbf{0}, \mathbf{I}_d).
\end{align*}
Using properties of Gaussian distributions, the score function \(\nabla_{\mathbf{X}_t} \log p_t(\mathbf{X}_t)\) can be written in two ways:
\[
\nabla_{\mathbf{X}_t} \log p_t(\mathbf{X}_t) = \frac{\mathbb{E}\left[ \mathbf{X}_{t_{\text{data}}} \mid \mathbf{X}_t \right] - \mathbf{X}_t}{\sigma_t^2 - \sigma_{t_{\text{data}}}^2} = \frac{\mathbb{E}\left[ \mathbf{X}_{t_{\text{target}}} \mid \mathbf{X}_t \right] - \mathbf{X}_t}{\sigma_t^2 - \sigma_{t_{\text{target}}}^2}.
\]
Equating these expressions and rearranging terms:
\[
\mathbb{E}\left[ \mathbf{X}_{t_{\text{data}}} \mid \mathbf{X}_t \right] = \frac{\sigma_t^2 - \sigma_{t_{\text{data}}}^2}{\sigma_t^2 - \sigma_{t_{\text{target}}}^2} \mathbb{E}\left[ \mathbf{X}_{t_{\text{target}}} \mid \mathbf{X}_t \right] + \frac{\sigma_{t_{\text{data}}}^2 - \sigma_{t_{\text{target}}}^2}{\sigma_t^2 - \sigma_{t_{\text{target}}}^2} \mathbf{X}_t.
\]

Recognizing the coefficients as \(\gamma(t, \sigma_{t_{\text{target}}})\) and \(\delta(t, \sigma_{t_{\text{target}}})\), respectively:

\[
\mathbb{E}\left[ \mathbf{X}_{t_{\text{data}}} \mid \mathbf{X}_t \right] = \gamma(t, \sigma_{t_{\text{target}}}) \mathbb{E}\left[ \mathbf{X}_{t_{\text{target}}} \mid \mathbf{X}_t \right] + \delta(t, \sigma_{t_{\text{target}}}) \mathbf{X}_t.
\]
Define the auxiliary function:
\[
\mathbf{g}_\theta(\mathbf{X}_t, t) = \gamma(t, \sigma_{t_{\text{target}}}) \mathbf{h}_\theta(\mathbf{X}_t, t) + \delta(t, \sigma_{t_{\text{target}}}) \mathbf{X}_t.
\]
Substituting into the loss function \(J(\theta)\), we have:
\[
J(\theta) = \mathbb{E} \left[ \left\| \mathbf{g}_\theta(\mathbf{X}_t, t) - \mathbf{X}_{t_{\text{data}}} \right\|^2 \right].
\]
This is a mean squared error (MSE) objective between \(\mathbf{g}_\theta(\mathbf{X}_t, t)\) and \(\mathbf{X}_{t_{\text{data}}}\).

The objective function becomes:
\[
J(\theta) = \mathbb{E}_{\mathbf{X}_{t_{\text{data}}}} \mathbb{E}_{t} \mathbb{E}_{\mathbf{X}_t \mid \mathbf{X}_{t_{\text{data}}}} \left[ \left\| \mathbf{g}_\theta(\mathbf{X}_t, t) - \mathbf{X}_{t_{\text{data}}} \right\|^2 \right].
\]
By the property of MSE minimization (Lemma~\ref{lemma:MSE_minimizer_with_J}), the function \(\mathbf{g}_{\theta^*}(\mathbf{X}_t, t)\) that minimizes \(J(\theta)\) satisfies:
\[
\mathbf{g}_{\theta^*}(\mathbf{X}_t, t) = \mathbb{E}\left[ \mathbf{X}_{t_{\text{data}}} \mid \mathbf{X}_t \right].
\]
Substituting the earlier expression for \(\mathbb{E}\left[ \mathbf{X}_{t_{\text{data}}} \mid \mathbf{X}_t \right]\):
\[
\gamma(t, \sigma_{t_{\text{target}}}) \mathbf{h}_{\theta^*}(\mathbf{X}_t, t) + \delta(t, \sigma_{t_{\text{target}}}) \mathbf{X}_t = \gamma(t, \sigma_{t_{\text{target}}}) \mathbb{E}\left[ \mathbf{X}_{t_{\text{target}}} \mid \mathbf{X}_t \right] + \delta(t, \sigma_{t_{\text{target}}}) \mathbf{X}_t.
\]
Subtracting \(\delta(t, \sigma_{t_{\text{target}}}) \mathbf{X}_t\) from both sides:
\[
\gamma(t, \sigma_{t_{\text{target}}}) \mathbf{h}_{\theta^*}(\mathbf{X}_t, t) = \gamma(t, \sigma_{t_{\text{target}}}) \mathbb{E}\left[ \mathbf{X}_{t_{\text{target}}} \mid \mathbf{X}_t \right].
\]
Since \(\gamma(t, \sigma_{t_{\text{target}}}) > 0\) (due to \(\sigma_t\) being strictly increasing and $t \geq t_{\text{data}}$), we can divide both sides by \(\gamma(t, \sigma_{t_{\text{target}}})\):
\[
\mathbf{h}_{\theta^*}(\mathbf{X}_t, t) = \mathbb{E}\left[ \mathbf{X}_{t_{\text{target}}} \mid \mathbf{X}_t \right].
\]
This completes the proof.
\end{proof}


\begin{corollary}[Reparameterized Generalized Denoising Score Matching; restated Corollary~\ref{cor_reparam_gdsm}]
\label{cor_reparam_gdsm_full}

Let the assumptions of the Generalized Denoising Score Matching theorem ~\ref{thm_gdsm_proof} hold. Additionally, define:

\begin{enumerate}
    \item \textbf{Reparameterization}: For \(t \geq t_{\text{data}}\), define \(\tau\) such that
    \begin{equation}
        \sigma_\tau^2 = \sigma_t^2 - \sigma_{t_{\text{data}}}^2,
    \end{equation}
    where \(\tau \in (0, T']\) and \(T'\) is determined by \(\sigma_{T'}^2 = \sigma_T^2 - \sigma_{t_{\text{data}}}^2\).
    Note that given \(\tau\) and \(t_{\text{data}}\), we can recover \(t\) using the inverse function of \(\sigma_t\), denoted as \(\sigma_t^{-1}\):
    \begin{equation}
        t = \sigma_t^{-1}(\sqrt{\sigma_\tau^2 + \sigma_{t_{\text{data}}}^2}).
    \end{equation}
    This inverse function exists and is well-defined due to the strictly monotonically increasing property of \(\sigma_t\).
\end{enumerate}

Define the new objective function:
\begin{equation}
    J'(\theta) = \mathbb{E}_{\mathbf{X}_{t_{\text{data}}}, \tau, \mathbf{X}_t} \left[ \left\| \gamma'(\tau, \sigma_{t_{\text{target}}}) \, \mathbf{h}_\theta(\mathbf{X}_t, t) + \delta'(\tau, \sigma_{t_{\text{target}}}) \, \mathbf{X}_t - \mathbf{X}_{t_{\text{data}}} \right\|^2 \right],
    \label{eq:reparam_loss}
\end{equation}
where \(\tau\) is uniformly sampled from \([0, T']\), and the coefficients are defined as:
\[
\gamma'(\tau, \sigma_{t_{\text{target}}}) = \frac{\sigma_{\tau}^2}{\sigma_{\tau}^2 + \sigma_{t_{\text{data}}}^2 - \sigma_{t_{\text{target}}}^2}, \quad
\delta'(\tau, \sigma_{t_{\text{target}}}) = \frac{\sigma_{t_{\text{data}}}^2 - \sigma_{t_{\text{target}}}^2}{\sigma_{\tau}^2 + \sigma_{t_{\text{data}}}^2 - \sigma_{t_{\text{target}}}^2}.
\]

Given \(\mathbf{X}_{t_{\text{data}}}\) and \(\tau\), \(\mathbf{X}_t\) is sampled as:
\begin{equation}
    \mathbf{X}_t = \mathbf{X}_{t_{\text{data}}} + \sigma_\tau \mathbf{Z}', \quad \text{where} \quad \mathbf{Z}' \sim \mathcal{N}(\mathbf{0}, \mathbf{I}_d).
\end{equation}

Then, the minimizer \(\theta^*\) of \(J'(\theta)\) satisfies:
\begin{equation}
    \mathbf{h}_{\theta^*}(\mathbf{X}_t, t) = \mathbb{E}[\mathbf{X}_{t_{\text{target}}} \mid \mathbf{X}_t], \quad \forall \mathbf{X}_t \in \mathbb{R}^d, \ t \geq t_{\text{data}}.
\end{equation}
\end{corollary}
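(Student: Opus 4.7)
The plan is to reduce the corollary to the original theorem by showing that the reparameterization is just a change of variables that leaves the integrand of the objective pointwise identical. By assumption $\sigma_t$ is strictly monotonically increasing, so the map $t \mapsto \tau$ defined by $\sigma_\tau^2 = \sigma_t^2 - \sigma_{t_{\text{data}}}^2$ is a bijection between $(t_{\text{data}}, T]$ and $(0, T']$, with inverse $t = \sigma_t^{-1}\bigl(\sqrt{\sigma_\tau^2 + \sigma_{t_{\text{data}}}^2}\bigr)$. So each $\tau$ uniquely determines a $t$ and vice versa, and working in either variable describes the same family of noise levels.

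Next I would verify that under this identification the two objectives agree termwise. Substituting $\sigma_t^2 = \sigma_\tau^2 + \sigma_{t_{\text{data}}}^2$ into the coefficients of Theorem~\ref{thm_gdsm_proof} gives $\gamma(t,\sigma_{t_{\text{target}}}) = \gamma'(\tau,\sigma_{t_{\text{target}}})$ and $\delta(t,\sigma_{t_{\text{target}}}) = \delta'(\tau,\sigma_{t_{\text{target}}})$. Likewise, the sampling recipe $\mathbf{X}_t = \mathbf{X}_{t_{\text{data}}} + \sqrt{\sigma_t^2 - \sigma_{t_{\text{data}}}^2}\mathbf{Z}$ becomes $\mathbf{X}_t = \mathbf{X}_{t_{\text{data}}} + \sigma_\tau \mathbf{Z}'$, which is exactly the sampling specified in the corollary. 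So for every realization $(\mathbf{X}_{t_{\text{data}}}, \mathbf{X}_t, t, \tau)$ with $t$ and $\tau$ linked, the squared-error integrand in $J'(\theta)$ coincides with the one in $J(\theta)$.

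The two objectives therefore differ only in the measure placed on the noise-level variable: uniform in $t$ vs.\ uniform in $\tau$. But the core minimization argument in Theorem~\ref{thm_gdsm_proof} is pointwise in $t$: for each fixed $t$ it defines the auxiliary $\mathbf{g}_\theta(\mathbf{X}_t,t) = \gamma\,\mathbf{h}_\theta + \delta\,\mathbf{X}_t$ and applies Lemma~\ref{lemma:MSE_minimizer_with_J} to the inner conditional MSE $\mathbb{E}_{\mathbf{X}_t\mid\mathbf{X}_{t_{\text{data}}}}[\|\mathbf{g}_\theta - \mathbf{X}_{t_{\text{data}}}\|^2]$. That step is insensitive to how the outer average over noise levels is weighted, as long as sufficient expressiveness of $\{\mathbf{h}_\theta\}$ lets the minimizer match the Bayes-optimal function at (almost) every $(\mathbf{X}_t,t)$. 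Reparameterizing the outer measure as uniform in $\tau$ just reweights this average, and since $\tau \sim \mathcal{U}(0,T']$ still puts positive mass on every admissible noise level, the optimal $\mathbf{g}_{\theta^*}(\mathbf{X}_t,t)$ is again $\mathbb{E}[\mathbf{X}_{t_{\text{data}}}\mid \mathbf{X}_t]$.

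The final step is algebraically identical to the theorem: using the double application of Tweedie's formula that expressed $\mathbb{E}[\mathbf{X}_{t_{\text{data}}}\mid \mathbf{X}_t] = \gamma\,\mathbb{E}[\mathbf{X}_{t_{\text{target}}}\mid \mathbf{X}_t] + \delta\,\mathbf{X}_t$, cancelling the $\delta\,\mathbf{X}_t$ skip term, and dividing by $\gamma' > 0$ (guaranteed since $\sigma_\tau^2 > 0$ for $\tau \in (0,T']$) yields $\mathbf{h}_{\theta^*}(\mathbf{X}_t,t) = \mathbb{E}[\mathbf{X}_{t_{\text{target}}}\mid \mathbf{X}_t]$. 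I do not anticipate a real obstacle here; the only thing to be careful about is confirming that $\gamma'(\tau,\sigma_{t_{\text{target}}}) > 0$ strictly for all $\tau \in (0,T']$ (so the division is legal) and that the uniform-in-$\tau$ sampling truly covers the same range of noise levels, which both follow directly from the monotonicity of $\sigma_t$ and the constraint $\sigma_{t_{\text{target}}} \le \sigma_{t_{\text{data}}}$.
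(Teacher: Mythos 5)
Your proposal is correct and follows essentially the same route as the paper's proof: substitute \(\sigma_t^2 = \sigma_\tau^2 + \sigma_{t_{\text{data}}}^2\) to identify \(\gamma' = \gamma\) and \(\delta' = \delta\), apply the pointwise MSE-minimization lemma (whose argument is indeed insensitive to the reweighting of the outer noise-level measure) to conclude \(\mathbf{g}_{\theta^*}(\mathbf{X}_t,t) = \mathbb{E}[\mathbf{X}_{t_{\text{data}}} \mid \mathbf{X}_t]\), and then cancel the skip term and divide by \(\gamma' > 0\). Your extra remarks on the bijectivity of the \(t \leftrightarrow \tau\) map and on positive coverage of the noise-level range make explicit what the paper leaves implicit, but they do not constitute a different argument.
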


\begin{proof}

Substituting \(\sigma_t^2\) into \(\gamma(t,\sigma_{t_{\text{target}}})\) and \(\delta(t, \sigma_{t_{\text{target}}})\) from Theorem~\ref{thm_gdsm}, we obtain:
\[
\gamma'(\tau, \sigma_{t_{\text{target}}}) = \gamma(t,\sigma_{t_{\text{target}}}) = \frac{\sigma_\tau^2}{\sigma_\tau^2 + \sigma_{t_{\text{data}}}^2 - \sigma_{t_{\text{target}}}^2}, \quad
\delta'(\tau, \sigma_{t_{\text{target}}}) = \delta(t, \sigma_{t_{\text{target}}}) =  \frac{\sigma_{t_{\text{data}}}^2 - \sigma_{t_{\text{target}}}^2}{\sigma_\tau^2 + \sigma_{t_{\text{data}}}^2 - \sigma_{t_{\text{target}}}^2}.
\]
Define \(\mathbf{g}_\theta(\mathbf{X}_t, t) = \gamma'(\tau, \sigma_{t_{\text{target}}}) \mathbf{h}_\theta(\mathbf{X}_t, t) + \delta'(\tau, \sigma_{t_{\text{target}}}) \mathbf{X}_t\). The objective function becomes:
\[
J'(\theta) = \mathbb{E}_{\mathbf{X}_{t_{\text{data}}}} \mathbb{E}_{\tau} \mathbb{E}_{\mathbf{X}_t \mid \mathbf{X}_{t_{\text{data}}}} \left[ \left\| \mathbf{g}_\theta(\mathbf{X}_t, t) - \mathbf{X}_{t_{\text{data}}} \right\|^2 \right].
\]
By Lemma~\ref{lemma:MSE_minimizer_with_J}, the function minimizing \(J'(\theta)\) is:
\[
\mathbf{g}_{\theta^*}(\mathbf{X}_t, t) = \mathbb{E}\left[ \mathbf{X}_{t_{\text{data}}} \mid \mathbf{X}_t \right].
\]
From proof of Theorem~\ref{thm_gdsm}, we have:
\[
\mathbb{E}\left[ \mathbf{X}_{t_{\text{data}}} \mid \mathbf{X}_t \right] = \gamma'(\tau, \sigma_{t_{\text{target}}}) \mathbb{E}\left[ \mathbf{X}_{t_{\text{target}}} \mid \mathbf{X}_t \right] + \delta'(\tau, \sigma_{t_{\text{target}}}) \mathbf{X}_t.
\]
Therefore,
\[
\mathbf{g}_{\theta^*}(\mathbf{X}_t, t) = \gamma'(\tau, \sigma_{t_{\text{target}}}) \mathbf{h}_{\theta^*}(\mathbf{X}_t, t) + \delta'(\tau, \sigma_{t_{\text{target}}}) \mathbf{X}_t = \gamma'(\tau, \sigma_{t_{\text{target}}}) \mathbb{E}\left[ \mathbf{X}_{t_{\text{target}}} \mid \mathbf{X}_t \right] + \delta'(\tau, \sigma_{t_{\text{target}}}) \mathbf{X}_t.
\]
Comparing both expressions, we conclude:
\[
\gamma'(\tau, \sigma_{t_{\text{target}}}) \mathbf{h}_{\theta^*}(\mathbf{X}_t, t) = \gamma'(\tau, \sigma_{t_{\text{target}}}) \mathbb{E}\left[ \mathbf{X}_{t_{\text{target}}} \mid \mathbf{X}_t \right].
\]
Since \(\gamma'(\tau, \sigma_{t_{\text{target}}}) > 0\), we divide both sides by \(\gamma'(\tau, \sigma_{t_{\text{target}}})\):
\[
\mathbf{h}_{\theta^*}(\mathbf{X}_t, t) = \mathbb{E}\left[ \mathbf{X}_{t_{\text{target}}} \mid \mathbf{X}_t \right].
\]
This completes the proof.
\end{proof}

\subsection{Extension to Variance Preserving Case}
\label{subsec:vp_extension}
Our GDSM framework can be naturally extended to the variance preserving (VP) \cite{song2020score, ho2020denoising} case. For example, when \(\sigma_{t_{\text{target}}} = 0\), which corresponds to the VP formulation in ADSM~\cite{daras2024consistent}. In this case, the data model follows:
\begin{equation}
    \mathbf{X}_{t_{\text{data}}} = \sqrt{1 - \sigma_{t_{\text{data}}}^2}\mathbf{X}_0 + \sigma_{t_{\text{data}}}\mathbf{Z}, \quad 0 < \sigma_{t_{\text{data}}} < 1
\end{equation}

Let \(\mathbf{X}_0 \in \mathbb{R}^d\) represent clean data, and for any $ \sigma_{t_{\text{data}}} < \sigma_{t} < 1$, the forward corrupted $\mathbf{X}_t$ is given by:
\begin{equation}
    \mathbf{X}_t = \sqrt{1 - \sigma_t^2}\mathbf{X}_0 + \sigma_t\mathbf{Z}_t, \quad \mathbf{Z}_t \sim \mathcal{N}(\mathbf{0}, \mathbf{I}_d)
\end{equation}

Define the objective function:
\begin{equation}
    L_{\text{VP}}(\theta) = \mathbb{E}_{\mathbf{X}_{t_{\text{data}}}, t, \mathbf{X}_t} \left[ \left\| \frac{\sigma_t^2}{\sigma_t^2 - \sigma_{t_{\text{data}}}^2}\sqrt{1 - \sigma_{t_{\text{data}}}^2} \, \mathbf{h}_\theta(\mathbf{X}_t, t) - \sigma_{t_{\text{data}}}^2\frac{\sqrt{1 - \sigma_t^2}}{\sigma_t^2 - \sigma_{t_{\text{data}}}^2}\mathbf{X}_t - \mathbf{X}_{t_{\text{data}}} \right\|^2 \right]
\end{equation}

Then, the minimizer \(\theta^*\) of \(J_{\text{VP}}(\theta)\) satisfies:
\begin{equation}
    \mathbf{h}_{\theta^*}(\mathbf{X}_t, t) = \mathbb{E}[\mathbf{X}_0 \mid \mathbf{X}_t], \quad \forall \mathbf{X}_t \in \mathbb{R}^d, \ t \geq t_{\text{data}}.
\end{equation}

\subsection{Connection to Noisier2Noise}
\label{subsec:nr2n_connection}

We demonstrate that Noisier2Noise \cite{moran2020noisier2noise} emerges as a special case of our Generalized Denoising Score Matching (GDSM) framework under specific conditions. Let us establish the correspondence between notations: in Noisier2Noise, $X$ represents the clean image, $Y = X + N$ represents the noisy observation with noise $N$, and $Z = Y + M$ represents the doubly-noisy image with additional synthetic noise $M$. These correspond to our formulation where $X$ is $\mathbf{X}_0$, $Y$ is $\mathbf{X}_{t_{\text{data}}}$, and $Z$ is $\mathbf{X}_t$.

From proof of Theorem~\ref{thm_gdsm}, when setting $\sigma_{t_{\text{target}}} = 0$, we obtain:
\begin{equation}
    \mathbb{E}[\mathbf{X}_{t_{\text{data}}} \mid \mathbf{X}_t] = \frac{\sigma_t^2 - \sigma_{t_{\text{data}}}^2}{\sigma_t^2} \mathbb{E}[\mathbf{X}_0 \mid \mathbf{X}_t] + \frac{\sigma_{t_{\text{data}}}^2}{\sigma_t^2} \mathbf{X}_t
    \label{eq:gdsm_special}
\end{equation}

In the improved variant of Noisier2Noise, a parameter $\alpha$ controls the magnitude of synthetic noise $M$ relative to the original noise $N$. We can establish that this parameter corresponds to our noise schedule through:

\begin{equation}
    \alpha^2 = \frac{\sigma_t^2 - \sigma_{t_{\text{data}}}^2}{\sigma_{t_{\text{data}}}^2}
    \label{eq:alpha_relation}
\end{equation}

Under this relationship, Equation~\eqref{eq:gdsm_special} becomes equivalent to the Noisier2Noise formulation:
\begin{equation}
    \mathbb{E}[Y|Z] = \frac{\alpha^2}{1 + \alpha^2} \mathbb{E}[X|Z] + \frac{1}{1 + \alpha^2} Z
    \label{eq:n2n_equiv}
\end{equation}

This equivalence leads to the characteristic Noisier2Noise correction formula:
\begin{equation}
    \mathbb{E}[X|Z] = \frac{(1 + \alpha^2)\mathbb{E}[Y|Z] - Z}{\alpha^2}
    \label{eq:n2n_correction}
\end{equation}

In the standard case where $\alpha = 1$, this reduces to $\mathbb{E}[X|Z] = 2\mathbb{E}[Y|Z] - Z$, which corresponds to our framework with $\sigma_t^2 = 2\sigma_{t_{\text{data}}}^2$.

Our GDSM framework offers several advances over Noisier2Noise. First, it provides a continuous noise schedule through $\sigma_t$, allowing the model to learn from a spectrum of noise levels rather than a fixed ratio determined by $\alpha$. Second, it introduces explicit time conditioning in the network architecture, enabling better adaptation to different noise magnitudes. Third, and perhaps most importantly, it eliminates the need to tune the $\alpha$ parameter, which according to \cite{moran2020noisier2noise} is "difficult or impossible to derive in the absence of clean validation data." Instead, our approach automatically learns to handle different noise levels through the continuous schedule and time conditioning. Furthermore, GDSM extends beyond the clean image prediction task by supporting arbitrary target noise levels through $\sigma_{t_{\text{target}}}$, providing a unified framework for various denoising objectives.

\section{Model Architecture}
\label{app:model_architecture}

In this appendix, we provide a comprehensive description of the architectures employed for both single-contrast and multi-contrast MRI denoising. Our designs build upon the U-Net structure utilized in Denoising Diffusion Probabilistic Models (DDPM)~\cite{ho2020denoising}, incorporating advanced conditioning and attention mechanisms to enhance performance. For detailed implementation of the Noise Variance Conditioned Multi-Head Self-Attention (NVC-MSA) module, please refer to Appendix~\ref{app:model_pseudo_code}.

\subsection{Single-Contrast Model Architecture}
\label{app:single_contrast_architecture}

Our single-contrast denoising model employs a U-Net backbone augmented with time conditioning and the NVC-MSA module~\cite{hatamizadeh2023diffit}. 

\textbf{Time Conditioning}: The model adapts its processing based on the noise level \( t \) by integrating time embeddings into the convolutional layers. This is achieved through adaptive normalization (e.g., instance normalization followed by an affine transformation conditioned on the time embedding), as introduced in DDPM.

\textbf{NVC-MSA Module}: To enable the network to adjust to varying noise levels, we incorporate the NVC-MSA module into the self-attention mechanisms of the U-Net. The module conditions the attention on the current noise variance, allowing the network to effectively capture long-range dependencies and adapt to different noise scales. Mathematically, the queries, keys, and values are computed as:
\begin{align}
    \mathbf{Q} &= \mathbf{W}_Q (\mathbf{X}) + \mathbf{b}_Q(t), \\
    \mathbf{K} &= \mathbf{W}_K (\mathbf{X}) + \mathbf{b}_K(t), \\
    \mathbf{V} &= \mathbf{W}_V (\mathbf{X}) + \mathbf{b}_V(t),
\end{align}
where \(\mathbf{b}_Q(t)\), \(\mathbf{b}_K(t)\), and \(\mathbf{b}_V(t)\) are learned affine transformations of the time embedding. This NVC-MSA mechanism allows the attention modules to be aware of the noise level and adjust their focus accordingly, effectively capturing long-range dependencies. The implementation details and pseudo-code for the NVC-MSA module are provided in Appendix~\ref{app:model_pseudo_code}.

\subsection{Noise Variance Conditioned Multi-Head Self-Attention (NVC-MSA) Pseudo Code}
\label{app:model_pseudo_code}

Below is the code snippet of the NVC-MSA module, which is integral to both single-contrast and multi-contrast model architectures. This module conditions the self-attention mechanism on the noise variance level, enabling the model to adapt its attention based on the current noise level. 

The code snippet encapsulates the core functionality of the NVC-MSA module. The module first normalizes the input tensor and generates queries, keys, and values for spatial tokens. It then reshapes and projects the noise variance embeddings using 1x1 convolutions. These noise-conditioned components are added to the queries, keys, and values before applying the attention mechanism. Finally, the output is rearranged and projected to produce the final feature map.

\definecolor{darkgreen}{rgb}{0.0, 0.5, 0.0}
\lstset{
    language=Python,             
    basicstyle=\ttfamily\footnotesize, 
    keywordstyle=\color{blue},   
    stringstyle=\color{red},     
    commentstyle=\color{darkgreen},  
    numbers=none,                
    numberstyle=\tiny\color{gray}, 
    stepnumber=1,                
    showstringspaces=false,      
    frame=single,                  
    breaklines=true,             
    captionpos=b
}
\clearpage
\begin{lstlisting}[caption={Code snippet for NVC-MSA implementation}]
def forward(self, x, noise_emb):

    # Get shape of input tensor
    b, c, h, w = x.shape
    n = h * w
    
    # Normalize the input tensor
    x = self.norm(x)
    
    # Generate queries, keys, and values for spatial tokens
    qkv = self.to_qkv(x).chunk(3, dim=1)
    q, k, v = map(lambda t: rearrange(t, 'b (h d) x y -> b (x y) h d', h=self.heads), qkv)
    
    # Reshape and project noise variance embeddings using 1x1 convolutions
    noise_emb = noise_emb.view(b, -1, 1, 1)
    noise_q = self.noise_query_conv(noise_emb)
    noise_k = self.noise_key_conv(noise_emb)
    noise_v = self.noise_value_conv(noise_emb)
    
    # Rearrange the projected noise variance embeddings
    noise_q = rearrange(noise_q, 'b (h d) x y -> b (x y) h d', h=self.heads)
    noise_k = rearrange(noise_k, 'b (h d) x y -> b (x y) h d', h=self.heads)
    noise_v = rearrange(noise_v, 'b (h d) x y -> b (x y) h d', h=self.heads)
    
    # Add noise variance-dependent components to queries, keys, and values
    q = q + noise_q
    k = k + noise_k
    v = v + noise_v
    
    # Apply attention mechanism
    out = self.attend(q, k, v)
    
    # Rearrange and project the output
    out = rearrange(out, 'b (x y) h d -> b (h d) x y', x=h, y=w)
    
    return self.to_out(out)
\end{lstlisting}

\section{Multi-contrast C2S}
\label{app:multicontrast_c2s}
\subsection{Multi-Contrast Model Architecture}
\label{app:multi_contrast_architecture}

The multi-contrast denoising model extends the single-contrast architecture to handle multiple input contrasts, thereby enhancing denoising performance by leveraging complementary information.

\textbf{Multicontrast Fusion}: The model accepts multiple contrast inputs by concatenating the primary and complementary contrast images (e.g., (T1, T2)). An initial convolution layer extracts feature embeddings from the fused contrasts, which are then processed through the U-Net architecture.

\textbf{NVC-MSA Module}: Similar to the single-contrast model, the multi-contrast model integrates the NVC-MSA module into its self-attention mechanisms. By conditioning the attention on the noise variance level \( \sigma \), the model can adaptively adjust its focus based on the current noise level, as detailed in the pseudo-code provided in Appendix~\ref{app:model_pseudo_code}.

\textbf{Output Head}: Following the U-Net processing, the output head generates a single-channel image representing the denoised primary contrast image.

\textbf{Flexibility and Extensions}: The architecture can dynamically adjust to accommodate any number of input contrasts by modifying the input layer accordingly. Although our current implementation is based on U-Net, the NVC-MSA mechanism is compatible with other architectures, such as Vision Transformers~\cite{dosovitskiy2020image, liu2021swin}, where it can replace standard multi-head self-attention modules to enhance model complexity and performance. This extension remains an avenue for future research.

\subsection{Multi-Contrast C2S Algorithm}
\label{app:multicontrast_c2s_algorithm}

The multi-contrast C2S framework extends the single-contrast algorithm to leverage complementary information from auxiliary contrast images. Given a collection of noisy multi-contrast training images 
\[
\{(\mathbf{X}_{t_{\text{data}}, i}, \mathbf{C}_i)\}_{i=1}^N,
\]
where \(\mathbf{X}_{t_{\text{data}}, i} \in \mathbb{R}^d\) is the noisy target contrast image and \(\mathbf{C}_i \in \mathbb{R}^{d \times c}\) represents \(c\) auxiliary noisy contrast images, our goal is to estimate the clean target contrast image \(\mathbf{X}_0\) using both \(\mathbf{X}_{t_{\text{data}}}\) and \(\mathbf{C}\).
In the multi-contrast setting, we focus on the case where \(t_{\text{target}} = 0\) (i.e., \(\sigma_{t_{\text{target}}} = 0\)), aiming to directly estimate the MMSE estimator 
$
\mathbb{E}[\mathbf{X}_0 \mid \mathbf{X}_{t_{\text{data}}}, \mathbf{C}].
$
While the single-contrast C2S incorporates a detail refinement extension with a non-zero target noise level, we leave the exploration of such extensions and more advanced contrast fusion architectures for multi-contrast C2S as future work.
For the implementation of the denoising function \(\mathbf{D}_\theta(\mathbf{X}_\tau, \tau \mid \mathbf{C})\), the conditioning on auxiliary contrasts \(\mathbf{C}\) is achieved through a CNN encoder architecture that extracts features from each auxiliary contrast image. These extracted features are then concatenated with the features from the target contrast image in the feature space, allowing the model to effectively integrate complementary information from all available contrasts. The concatenated features are subsequently processed through the U-Net backbone with NVC-MSA modules, as in the single-contrast case.
Following the reparameterization strategy introduced in Section 3.1, we define a reparameterized function \(\mathbf{D}_\theta(\mathbf{X}_\tau, \tau \mid \mathbf{C})\) as:

\begin{equation}
\mathbf{D}_\theta(\mathbf{X}_\tau, \tau \mid \mathbf{C}) = \lambda_{\text{out}}(\tau, \sigma_{t_{\text{target}}}) \, \mathbf{h}_\theta(\mathbf{X}_t, t \mid \mathbf{C}) + \lambda_{\text{skip}}(\tau, \sigma_{t_{\text{target}}}) \, \mathbf{X}_t,
\end{equation}

where \(\mathbf{X}_t = \mathbf{X}_{t_{\text{data}}} + \sigma_{\tau} \mathbf{Z}\), with \(\mathbf{Z} \sim \mathcal{N}(\mathbf{0}, \mathbf{I}_d)\), and the coefficients remain consistent with the single-contrast case:

\begin{equation}
\lambda_{\text{out}}(\tau, \sigma_{t_{\text{target}}}) = \frac{\sigma_{\tau}^2}{\sigma_{\tau}^2 + \sigma_{t_{\text{data}}}^2 - \sigma_{t_{\text{target}}}^2}, \quad \lambda_{\text{skip}}(\tau, \sigma_{t_{\text{target}}}) = \frac{\sigma_{t_{\text{data}}}^2 - \sigma_{t_{\text{target}}}^2}{\sigma_{\tau}^2 + \sigma_{t_{\text{data}}}^2 - \sigma_{t_{\text{target}}}^2}
\end{equation}

The multi-contrast C2S loss function is then formulated as:

\begin{equation}
\mathcal{L}_{\text{MC-C2S}}(\theta) = \frac{1}{2} \mathbb{E}_{\substack{\mathbf{X}_{t_{\text{data}}} \sim p_{t_{\text{data}}}(\mathbf{x}) \\ \mathbf{C} \sim p(\mathbf{c}) \\ \tau \sim \mathcal{U}[0, T] \\ \mathbf{Z} \sim \mathcal{N}(\mathbf{0}, \mathbf{I}_d)}} \left[ 
w(\tau) \left\| \mathbf{D}_\theta(\mathbf{X}_\tau, \tau \mid \mathbf{C}) - \mathbf{X}_{t_{\text{data}}} \right\|_2^2 \right]
\end{equation}

The complete training procedure for multi-contrast C2S is outlined in Algorithm \ref{alg:mc_c2s_training}.

\begin{algorithm}[H]
\caption{Multi-Contrast Corruption2Self Training Procedure}
\label{alg:mc_c2s_training}
\small
\begin{algorithmic}[1]
\Require Noisy multi-contrast dataset $\{(\mathbf{X}_{t_{\text{data}}}^i, \mathbf{C}^i)\}_{i=1}^N$, $\mathbf{h}_\theta$, max noise level $T$, batch size $m$, total iterations $K$, noise schedule function $\sigma_\tau$ 
\For{$k = 1$ to $K$}
    \State Sample minibatch $\{(\mathbf{X}_{t_{\text{data}}}^i, \mathbf{C}^i)\}_{i=1}^m$, $\tau \sim \mathcal{U}(0, T]$, $\mathbf{Z} \sim \mathcal{N}(\mathbf{0}, \mathbf{I}_d)$
    \State Compute $\lambda_{\text{out}}(\tau, 0) = \frac{\sigma_{\tau}^2}{\sigma_{\tau}^2 + \sigma_{t_{\text{data}}}^2}$, $\lambda_{\text{skip}}(\tau, 0) = \frac{\sigma_{t_{\text{data}}}^2}{\sigma_{\tau}^2 + \sigma_{t_{\text{data}}}^2}$
    \State Recover $t = \sigma_t^{-1}\left( \sqrt{ \sigma_{\tau}^2 + \sigma_{t_{\text{data}}}^2 } \right)$
    \State Compute $\mathbf{X}_t \gets \mathbf{X}_{t_{\text{data}}} + \sigma_{\tau} \mathbf{Z}$
    \State Compute loss: $\mathcal{L} = \frac{1}{2m} \sum_{i=1}^m w(\tau) \left\| \lambda_{\text{out}}(\tau, 0) \, \mathbf{h}_\theta(\mathbf{X}_t^i, t \mid \mathbf{C}^i) + \lambda_{\text{skip}}(\tau, 0) \, \mathbf{X}_t^i - \mathbf{X}_{t_{\text{data}}}^i \right\|^2$
    \State Update $\theta$ using Adam optimizer to minimize $\mathcal{L}$
\EndFor
\end{algorithmic}
\end{algorithm}

During inference, given a noisy target contrast observation \(\mathbf{X}_{t_{\text{data}}}\) and auxiliary contrasts \(\mathbf{C}\), the denoised output is obtained by:

\begin{equation}
\hat{\mathbf{X}} = \mathbf{h}_{\theta^*}(\mathbf{X}_{t_{\text{data}}}, t_{\text{data}} \mid \mathbf{C})
\end{equation}

where the trained model \(\mathbf{h}_{\theta^*}\) approximates \(\mathbb{E}[\mathbf{X}_0 \mid \mathbf{X}_{t_{\text{data}}}, \mathbf{C}]\), providing a clean estimate of the target contrast image that benefits from the complementary information in the auxiliary contrasts.

\section{Additional Results on FastMRI}
\label{sec:additional_results_fastmri}

In this section, we present additional results on the fastMRI dataset, evaluating the performance of various denoising methods across different noise levels and contrasts. 

Table~\ref{tab:ablation_reparam_fastmri} summarizes the performance comparison between the baseline method (Without Reparameterization) and our proposed Corruption2Self (C2S) approach, under four configurations: PD with $\sigma=13/255$, PDFS with $\sigma=13/255$, PD with $\sigma=25/255$, and PDFS with $\sigma=25/255$.

\begin{table}[ht!]
    \centering
    \scriptsize 
    \setlength{\tabcolsep}{4pt} 
    \begin{tabular}{lcccc}
        \toprule
        \textbf{Method} & \textbf{PD}, $\sigma=13/255$ & \textbf{PDFS}, $\sigma=13/255$ & \textbf{PD}, $\sigma=25/255$ & \textbf{PDFS}, $\sigma=25/255$ \\
         & PSNR $\uparrow$ / SSIM $\uparrow$ & PSNR $\uparrow$ / SSIM $\uparrow$ & PSNR $\uparrow$ / SSIM $\uparrow$ & PSNR $\uparrow$ / SSIM $\uparrow$ \\
        \midrule
        Without Reparam. 
        & 32.65 / 0.821 
        & 30.08 / 0.676 
        & 30.16 / 0.747 
        & 28.24 / 0.570 \\
        
        With Reparam.
        & \textbf{33.48} / \textbf{0.832} 
        & \textbf{30.67} / \textbf{0.761} 
        & \textbf{30.91} / \textbf{0.756} 
        & \textbf{28.62} / \textbf{0.601} \\
        \bottomrule
    \end{tabular}
    \caption{Impact of reparameterization of noise levels on the fastMRI dataset. The "Without Reparam." row contains estimated baseline results, while "With Reparam." represents the proposed method with reparameterization.}
    \label{tab:ablation_reparam_fastmri}
\end{table}

\begin{figure}[hbt]
    \centering
    \includegraphics[width=\textwidth]{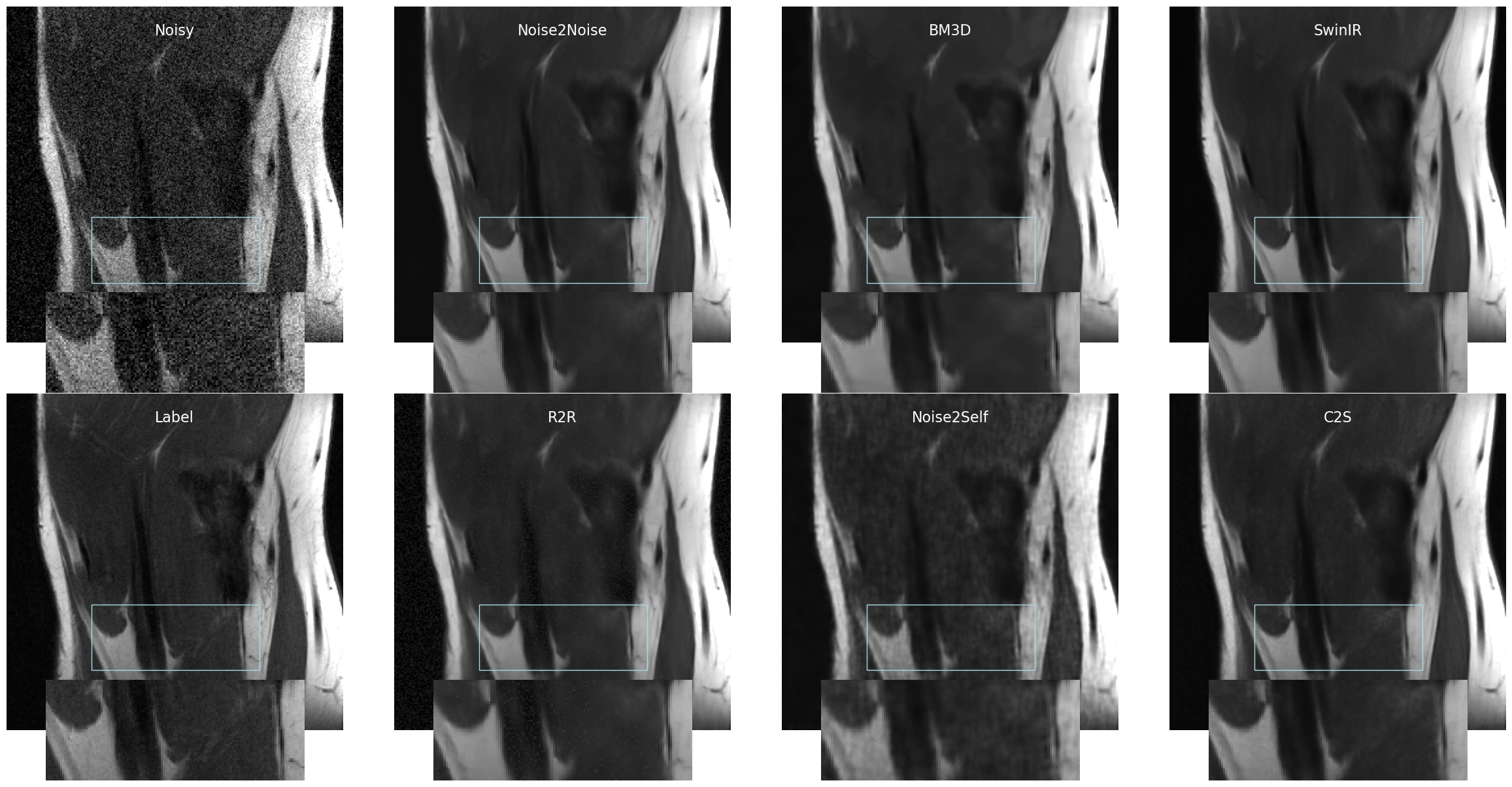}
    \caption{Comparison of different denoising methods for PD contrast (noise level 13/255) in fastMRI.}
    \label{fig:fastmri_PD13_comparison}
\end{figure}

\begin{figure}[hbt]
    \centering
    \includegraphics[width=\textwidth]{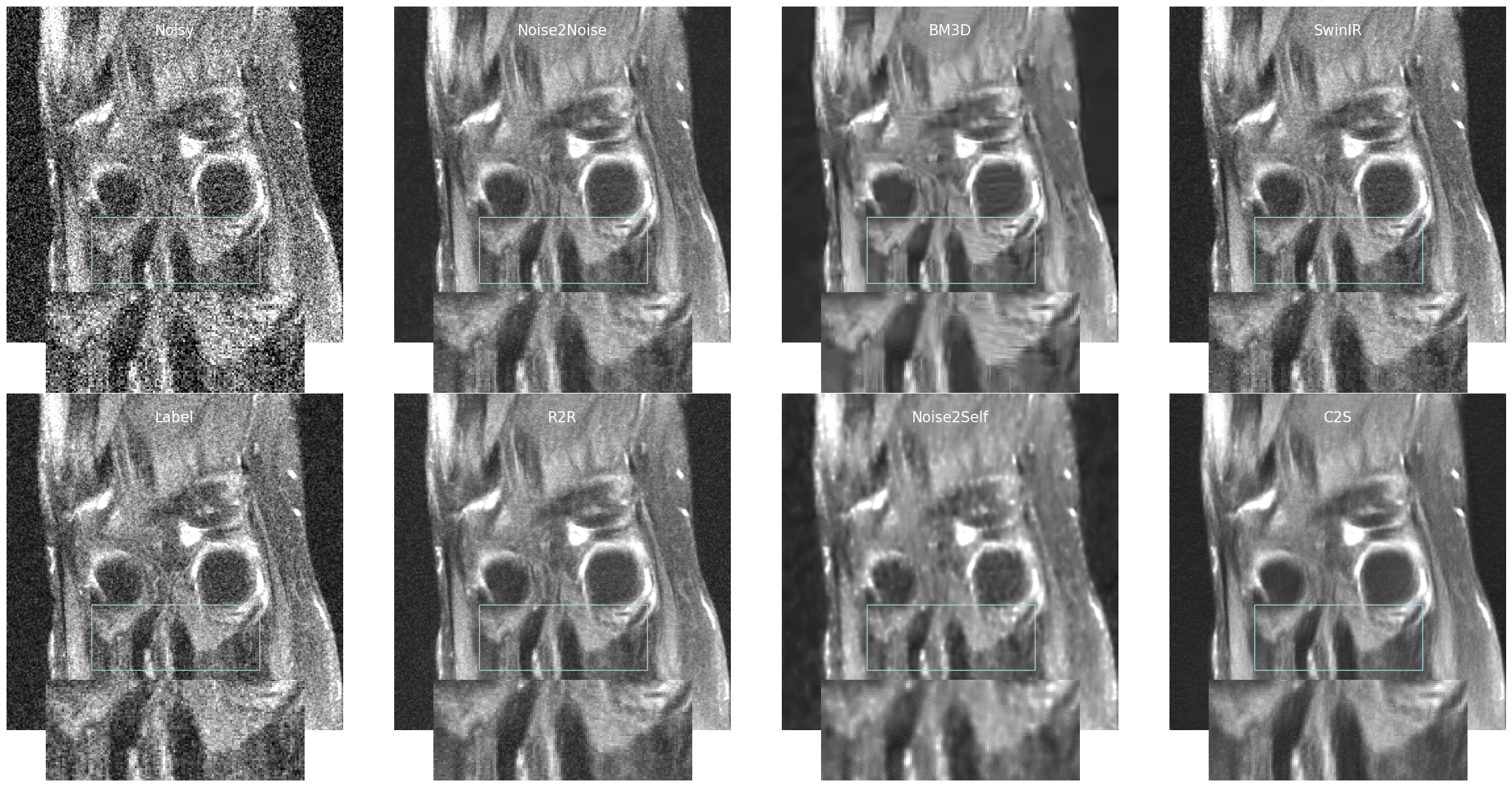}
    \caption{Comparison of different denoising methods for PDFS contrast (noise level 25/255) in fastMRI.}
    \label{fig:fastmri_PD25_comparison}
\end{figure}

\section{Additional Results on M4Raw}
\label{sec:additional_results_m4raw}

\subsection{Visual Comparison of Denoising Methods}

In this section, we provide a visual comparison of several denoising methods applied to T1, T2, and FLAIR contrast images in the M4Raw dataset. The denoising methods evaluated include Noise2Noise, BM3D, SwinIR, R2R, Noise2Self, and C2S, with Multi-repetition Averaged Label serving as the ground truth reference for comparison.

\begin{figure}[h!]
    \centering
    \includegraphics[width=\textwidth]{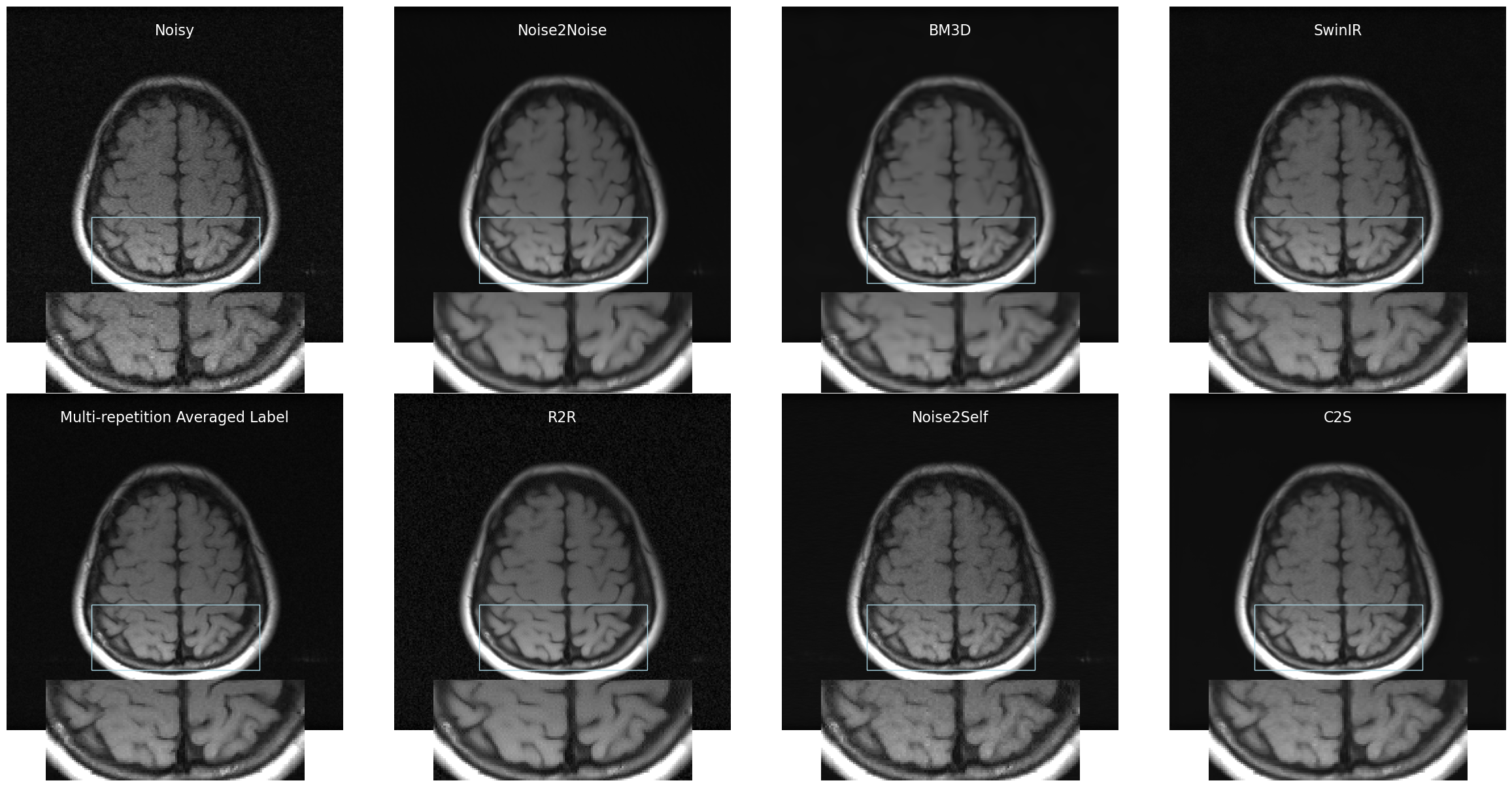}
    \caption{Comparison of different denoising methods for T1 contrast in M4Raw. The top row shows the original noisy image and results from Noise2Noise, BM3D, and SwinIR. The bottom row includes the multi-repetition averaged label, R2R, Noise2Self, and C2S methods. A zoomed-in section of each image is presented below each corresponding brain image for detailed comparison.}
    \label{fig:M4Raw_T1_comparison}
\end{figure}

\begin{figure}[h!]
    \centering
    \includegraphics[width=\textwidth]{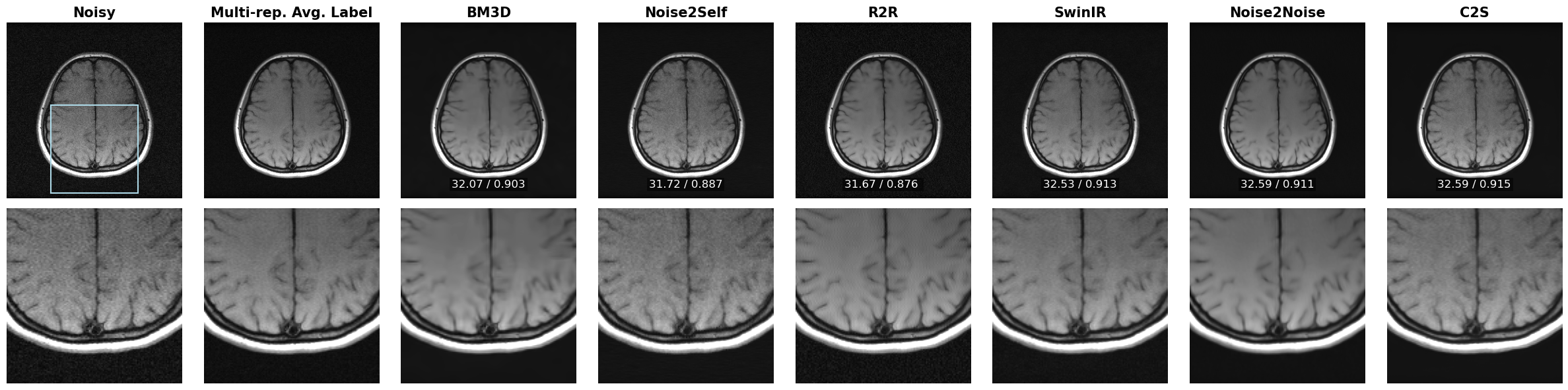}
    \vspace{-1pt} 
    \includegraphics[width=\textwidth]{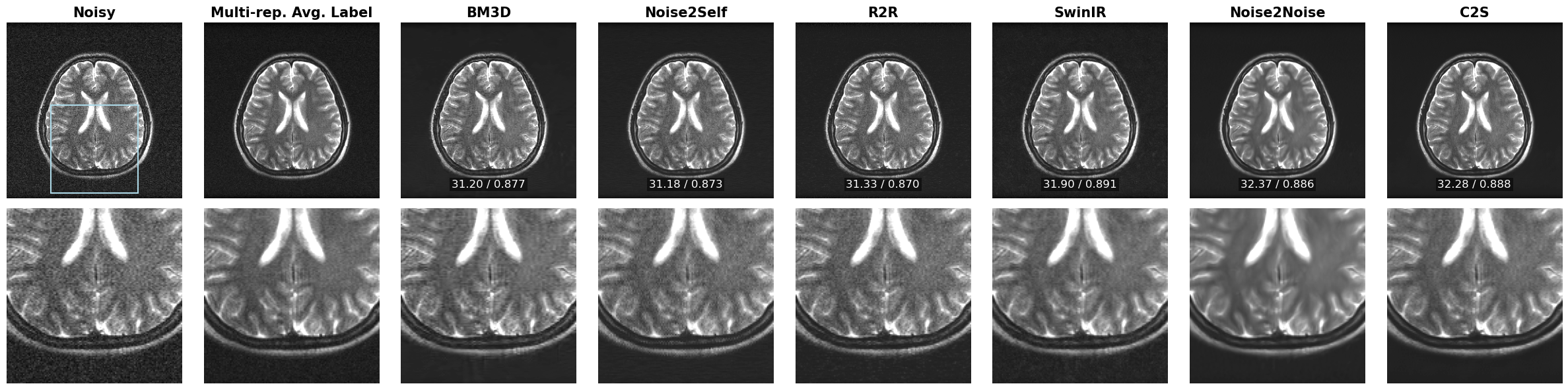}
    \caption{Top: Comparison of different denoising methods for T1 contrast in M4Raw. Bottom: Comparison of different denoising methods for T2 contrast in M4Raw.}
    \label{fig:M4Raw_T1_T2_vis_two_row}
\end{figure}

\begin{figure}[h!]
    \centering
    \includegraphics[width=\textwidth]{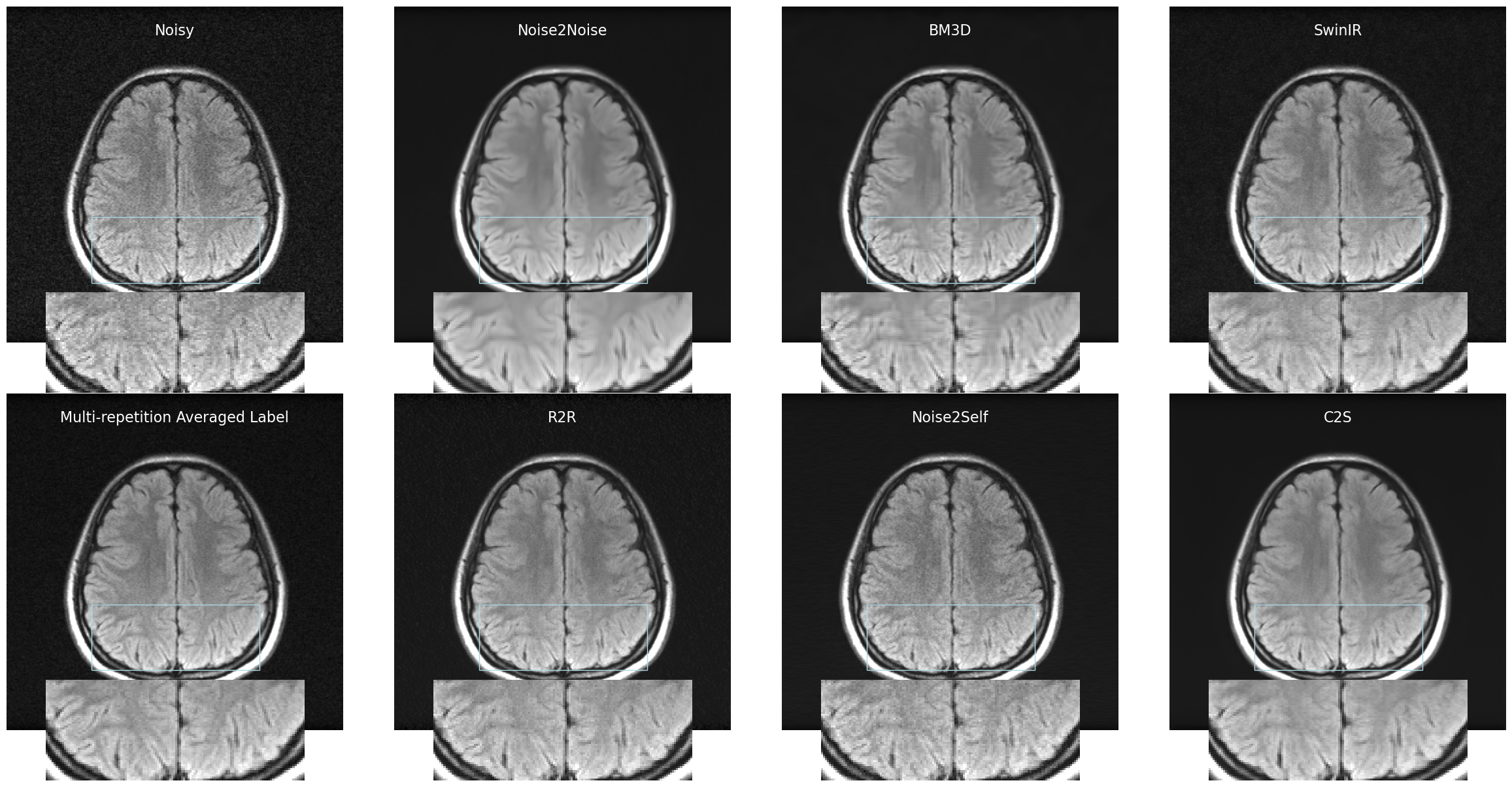}
    \caption{Comparison of different denoising methods for FLAIR contrast in M4Raw.}
    \label{fig:M4Raw_FLAIR_comparison}
\end{figure}

\begin{figure}[h!]
    \centering
    \includegraphics[width=\textwidth]{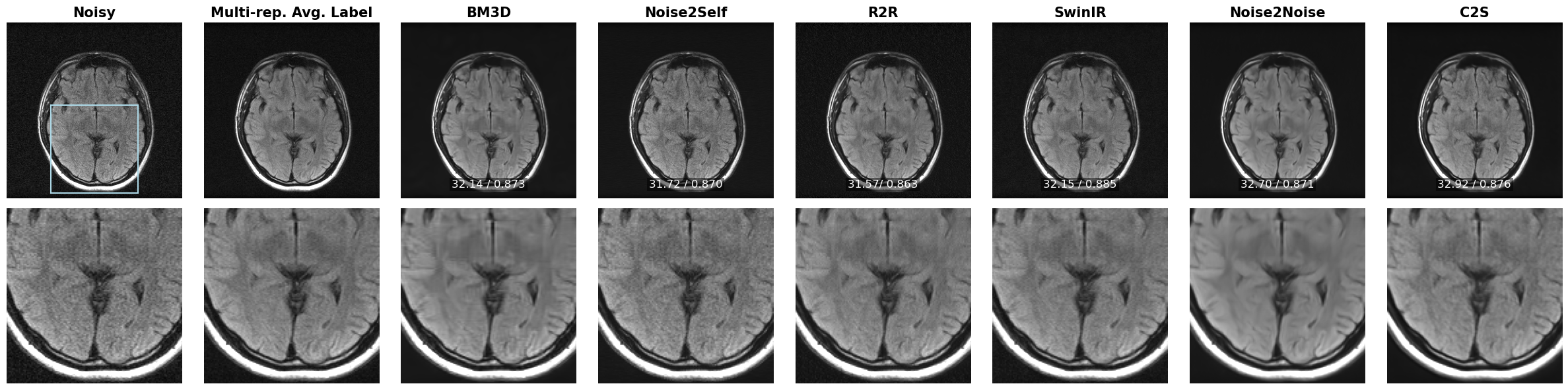}
    \caption{Comparison of different denoising methods for FLAIR contrast in M4Raw.}
    \label{fig:M4Raw_FLAIR_vis_two_row_0}
\end{figure}

\subsection{Evaluation on Matching-SNR Test Labels}
\begin{table}[hbt]
\centering
\resizebox{\columnwidth}{!}{
\begin{tabular}{lccc}
\toprule
\textbf{Methods} & \textbf{T1} & \textbf{T2} & \textbf{FLAIR} \\
 & PSNR / SSIM $\uparrow$ & PSNR / SSIM $\uparrow$ & PSNR / SSIM $\uparrow$ \\
\midrule
\multicolumn{4}{l}{\textit{Classical Non-Learning-Based Methods}} \\
NLM~\cite{froment2014parameter}     
& 34.65 / 0.897 & 33.72 / 0.873 & 32.83 / 0.830 \\
BM3D~\cite{makinen2020collaborative}     
 & 35.27 / 0.900 & 34.01 / 0.875 & 33.22 / 0.841 \\
\midrule
\multicolumn{4}{l}{\textit{Supervised Learning Methods}} \\
SwinIR~\cite{liang2021swinir}      
& 36.09 / 0.926 & 34.57 / 0.902 & 34.34 / 0.909 \\
Restormer~\cite{zamir2022restormer}      
& 35.96 / 0.926 & 34.13 / 0.898 & 34.21 / 0.908 \\
Noise2Noise~\cite{lehtinen2018noise2noise} 
& 34.82 / 0.892 & 33.92 / 0.861 & 33.73 / 0.879 \\
\midrule
\multicolumn{4}{l}{\textit{Self-Supervised Single-Contrast Methods}} \\
Noise2Void~\cite{krull2019noise2void}     
& 32.83 / 0.870 & 31.73 / 0.857 & 30.90 / 0.821 \\
Noise2Self~\cite{batson2019noise2self}     
& 34.17 / 0.883 & 32.64 / 0.847 & 31.96 / 0.823 \\
Recorrupted2Recorrupted~\cite{pang2021recorrupted}
& 33.60 / 0.801 & 32.93 / 0.820 & 32.42 / 0.794 \\
\textbf{C2S (Ours)}
& \textbf{36.11} / \textbf{0.925} & \textbf{34.87} / \textbf{0.904} & \textbf{34.15} / \textbf{0.898} \\
\bottomrule
\end{tabular}
}
\caption{Quantitative results on the M4Raw dataset evaluated on three-repetition-averaged test labels (matching training and validation SNR). Mean PSNR and SSIM metrics are reported. The best results among self-supervised methods are in bold.}
\label{tab:quantitative_results_m4raw_invivo_three_rep_avg_label}
\end{table}

When evaluated on test data matching the SNR of the training data, supervised methods like SwinIR and Restormer achieve the best performance, with PSNR improvements over self-supervised methods. This indicates that supervised approaches excel when the training and testing conditions are similar. However, the performance gap narrows when considering C2S, which attains PSNR and SSIM values comparable to those of the supervised methods.

These observations highlight a limitation of supervised methods: they may not generalize well to scenarios where the test data has different noise characteristics or higher SNR than the training data. In contrast, C2S demonstrates robust performance across different SNR levels, achieving competitive results on both matching-SNR and higher-SNR test labels. This suggests that our self-supervised approach is more resilient to variations in noise levels and better generalizes to cleaner images without relying on clean ground truth during training.

\section{Detail Refinement Algorithm} \label{app:detail_refinement}

In this appendix, we present the \textbf{Detail Refinement} algorithm as an extension to the primary Corruption2Self (C2S) training procedure. While the primary stage focuses on learning the conditional expectation \(\mathbb{E}[\mathbf{X}_0 \mid \mathbf{X}_t]\) for maximum denoising, this aggressive approach may lead to the loss of fine details and important features. The Detail Refinement stage addresses this issue by training the network to predict \(\mathbb{E}[\mathbf{X}_{t_{\text{target}}} \mid \mathbf{X}_t]\) with a non-zero target noise level \(\sigma_{t_{\text{target}}} > 0\), allowing the preservation of intricate structures and textures. Empirically, we discovered that uniformly sampling \(t_{\text{target}}\) from the interval \([0, t_{\text{data}}]\) during training already yields good results and leave more advanced tuning process to future work. This sampling strategy provides a good balance between noise reduction and detail preservation, allowing the network to learn a range of refinement levels adaptively.

\subsection{Loss Function for Detail Refinement}

The loss function for the Detail Refinement stage is similar to the primary C2S stage but introduces a target noise level \(\sigma_{t_{\text{target}}}\). The denoised output \( \mathbf{D}_\theta(\mathbf{X}_\tau, \tau, \sigma_{t_{\text{target}}}) \) is defined as:
\[
\mathbf{D}_\theta(\mathbf{X}_\tau, \tau, \sigma_{t_{\text{target}}}) = \lambda_{\text{out}}(\tau, \sigma_{t_{\text{target}}}) \, \mathbf{h}_\theta(\mathbf{X}_t, t) + \lambda_{\text{skip}}(\tau, \sigma_{t_{\text{target}}}) \, \mathbf{X}_t,
\]
where \(\mathbf{h}_\theta(\mathbf{X}_t, t)\) is the denoising network parameterized by \(\theta\), with input \(\mathbf{X}_t\) and noise level \(t\), and \(\lambda_{\text{out}}(\tau, \sigma_{t_{\text{target}}})\) and \(\lambda_{\text{skip}}(\tau, \sigma_{t_{\text{target}}})\) are blending factors between the network output and the noisy input.

The loss function is given by:
\begin{equation}
    \mathcal{L}_{\text{refine}}(\theta) = \frac{1}{2} \mathbb{E}_{\substack{\mathbf{X}_{t_{\text{data}}} \sim p_{t_{\text{data}}}(\mathbf{x}) \\ \tau \sim \mathcal{U}[0, T] \\ \sigma_{t_{\text{target}}} \sim \mathcal{U}(0, \sigma_{t_{\text{data}}}] \\ \mathbf{Z} \sim \mathcal{N}(\mathbf{0}, \mathbf{I}_d)}} \left[ w(\tau) \left\| \mathbf{D}_\theta(\mathbf{X}_\tau, \tau, \sigma_{t_{\text{target}}}) - \mathbf{X}_{t_{\text{data}}} \right\|_2^2 \right],
    \label{eq:stage2_loss}
\end{equation}
where:
\begin{itemize}
    \item \(\mathbf{X}_{t_{\text{data}}} \sim p_{t_{\text{data}}}(\mathbf{x})\) is the noisy observed data.
    \item \(\tau \sim \mathcal{U}[0, T]\) is the reparameterized noise level uniformly sampled from \([0, T]\), where \(T\) is the maximum noise level.
    \item \(\sigma_{t_{\text{target}}} \sim \mathcal{U}(0, \sigma_{t_{\text{data}}}]\) is the target noise level, sampled uniformly from the interval \((0, \sigma_{t_{\text{data}}}]\).
    \item \(\mathbf{X}_t = \mathbf{X}_{t_{\text{data}}} + \sigma_{\tau} \mathbf{Z}\), where \(\mathbf{Z} \sim \mathcal{N}(\mathbf{0}, \mathbf{I}_d)\) is Gaussian noise.
    \item \(\lambda_{\text{out}}(\tau, \sigma_{t_{\text{target}}})\) and \(\lambda_{\text{skip}}(\tau, \sigma_{t_{\text{target}}})\) are defined as:
    \begin{equation}
        \lambda_{\text{out}}(\tau, \sigma_{t_{\text{target}}}) = \frac{\sigma_{\tau}^2}{\sigma_{\tau}^2 + \sigma_{t_{\text{data}}}^2 - \sigma_{t_{\text{target}}}^2}, \quad \lambda_{\text{skip}}(\tau, \sigma_{t_{\text{target}}}) = \frac{\sigma_{t_{\text{data}}}^2 - \sigma_{t_{\text{target}}}^2}{\sigma_{\tau}^2 + \sigma_{t_{\text{data}}}^2 - \sigma_{t_{\text{target}}}^2}.
        \label{eq:refine_coefficients}
    \end{equation}
\end{itemize}

The goal of this refinement stage is to retain a controlled amount of noise, preventing the loss of fine details and features while still performing denoising.

\subsection{Algorithm Implementation}

The Detail Refinement training procedure minimizes the loss function \(\mathcal{L}_{\text{refine}}(\theta)\) over the network parameters \(\theta\). The steps are as follows:

\begin{algorithm}[H]
\caption{Detail Refinement Training Procedure}
\label{alg:detail_refinement}
\small
\begin{algorithmic}[1]
\Require Noisy dataset \(\{\mathbf{X}_{t_{\text{data}}}^i\}_{i=1}^N\), denoising network \(\mathbf{h}_\theta\), max noise level \(T\), batch size \(m\), total iterations \(K_{\text{refine}}\), noise schedule function \(\sigma_\tau\)
\For{$k = 1$ to $K_{\text{refine}}$}
    \State Sample minibatch \(\{\mathbf{X}_{t_{\text{data}}}^i\}_{i=1}^m\), \(\tau \sim \mathcal{U}(0, T]\), \(\mathbf{Z} \sim \mathcal{N}(\mathbf{0}, \mathbf{I}_d)\)
    \State Sample \(\sigma_{t_{\text{target}}} \sim \mathcal{U}(0, \sigma_{t_{\text{data}}}]\)
    \State Compute \(\lambda_{\text{out}}(\tau, \sigma_{t_{\text{target}}}) = \frac{\sigma_{\tau}^2}{\sigma_{\tau}^2 + \sigma_{t_{\text{data}}}^2 - \sigma_{t_{\text{target}}}^2}\)
    \State Compute \(\lambda_{\text{skip}}(\tau, \sigma_{t_{\text{target}}}) = \frac{\sigma_{t_{\text{data}}}^2 - \sigma_{t_{\text{target}}}^2}{\sigma_{\tau}^2 + \sigma_{t_{\text{data}}}^2 - \sigma_{t_{\text{target}}}^2}\)
    \State Recover \(t = \sigma_t^{-1}\left( \sqrt{ \sigma_{\tau}^2 + \sigma_{t_{\text{data}}}^2 } \right)\)
    \State Compute \(\mathbf{X}_t \gets \mathbf{X}_{t_{\text{data}}} + \sigma_{\tau} \mathbf{Z}\)
    \State Compute loss: \(\mathcal{L} = \frac{1}{2m} \sum_{i=1}^m w(\tau) \left\| \mathbf{D}_\theta(\mathbf{X}_\tau, \tau, \sigma_{t_{\text{target}}}) - \mathbf{X}_{t_{\text{data}}}^i \right\|^2\)
    \State Update \(\theta\) using Adam optimizer \cite{kingma2014adam} to minimize \(\mathcal{L}\)
\EndFor
\end{algorithmic}
\end{algorithm}

\section{Robustness to Noise Level Estimation Error}
\label{sec:appendix_noise_estimation_ablation}

In practical applications, precise noise level estimation can be challenging, making robustness to estimation errors a crucial property for denoising models. This section provides a comprehensive analysis of our model's performance under various degrees of noise level misestimation, demonstrating its capability to function effectively as a blind denoising model.

\subsection{Experimental Setup and Results}
We evaluate our pretrained model's robustness by systematically varying the input noise level estimation \(t_{\text{data}}\) from -50\% (underestimation) to +50\% (overestimation) of the true noise level. Table~\ref{tab:ablation_noise_estimation_m4raw} presents the quantitative results across T1, T2, and FLAIR contrasts on the M4Raw test set, while Figure~\ref{fig:robustness_plot_noise_level} visualizes these trends.

\begin{table}[ht!]
\centering
\begin{tabular}{lccc}
\toprule
\textbf{Noise Level} & \multicolumn{3}{c}{\textbf{M4Raw Dataset (PSNR / SSIM $\uparrow$)}} \\
\textbf{Estimation} & T1 & T2 & FLAIR \\
\midrule
-50\% & 32.6004 / 0.9154 & 32.3210 / 0.8890 & 32.9496 / 0.8757 \\
-40\% & 32.5958 / 0.9153 & 32.3125 / 0.8888 & 32.9498 / 0.8759 \\
-30\% & 32.5910 / 0.9151 & 32.3044 / 0.8886 & 32.9478 / 0.8761 \\
-20\% & 32.5861 / 0.9150 & 32.2964 / 0.8884 & 32.9436 / 0.8762 \\
-10\% & 32.5810 / 0.9149 & 32.2882 / 0.8881 & 32.9373 / 0.8763 \\
0\% & 32.5758 / 0.9148 & 32.2812 / 0.8879 & 32.9297 / 0.8764 \\
+10\% & 32.5704 / 0.9147 & 32.2700 / 0.8876 & 32.9192 / 0.8763 \\
+20\% & 32.5649 / 0.9146 & 32.2593 / 0.8873 & 32.9076 / 0.8763 \\
+30\% & 32.5592 / 0.9145 & 32.2468 / 0.8870 & 32.8947 / 0.8762 \\
+40\% & 32.5535 / 0.9143 & 32.2322 / 0.8866 & 32.8805 / 0.8760 \\
+50\% & 32.5475 / 0.9142 & 32.2151 / 0.8862 & 32.8655 / 0.8759 \\
\bottomrule
\end{tabular}
\caption{Performance analysis under varying noise level estimations on the M4Raw dataset. The model demonstrates remarkable stability across all contrasts, with particularly strong performance under slight underestimation. The '-' and '+' indicate underestimation and overestimation of the noise level, respectively.}
\label{tab:ablation_noise_estimation_m4raw}
\end{table}

\begin{figure}[hbt]
\centering
\includegraphics[width=\textwidth]{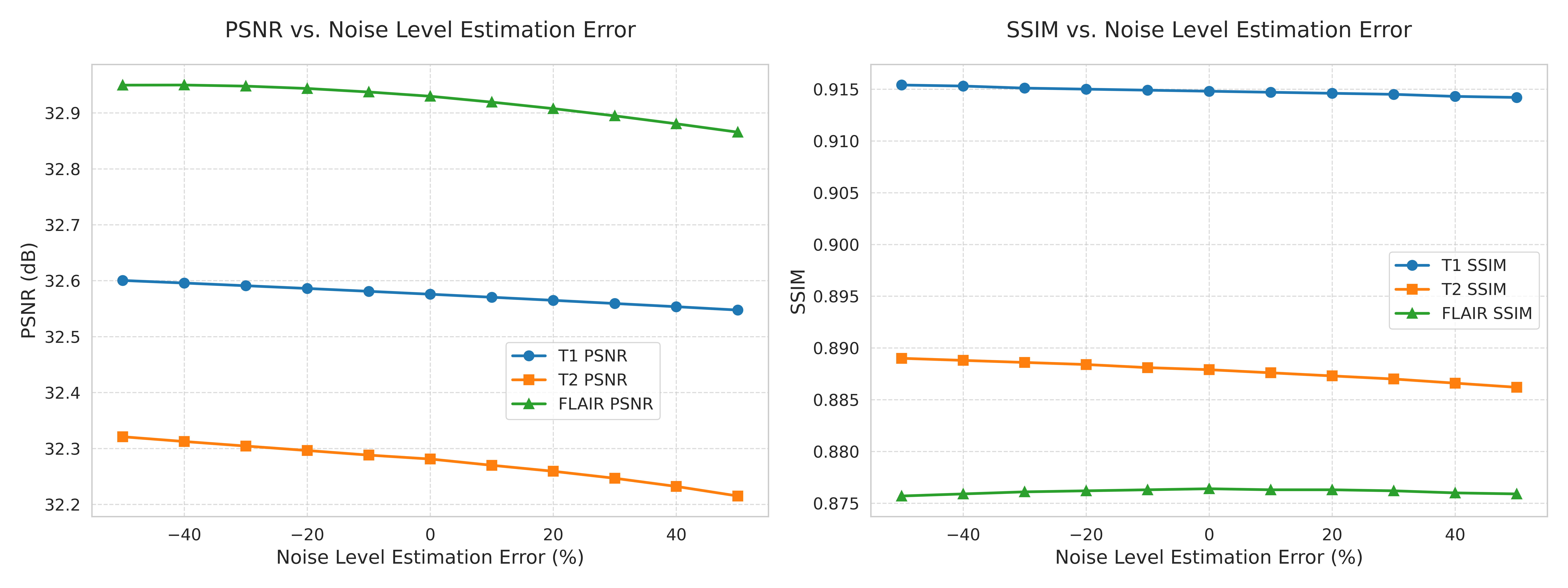}
\caption{Visualization of model performance under varying noise level estimations. The plots demonstrate consistent stability across all contrasts, with minimal performance degradation even under significant estimation errors (±50\%).}
\label{fig:robustness_plot_noise_level}
\end{figure}

\subsection{Analysis and Discussion}
Our experiments reveal several key findings: \textbf{Overall Stability}: The model maintains remarkably stable performance across all tested estimation errors, with maximum PSNR variations of only 0.053~dB, 0.106~dB, and 0.084~dB for T1, T2, and FLAIR contrasts, respectively. \textbf{Asymmetric Response}: Interestingly, the model shows slightly better performance under noise level underestimation compared to overestimation. For instance, with T1 contrast, a 50\% underestimation achieves a PSNR of 32.6004~dB, outperforming both the true noise level (32.5758~dB) and 50\% overestimation (32.5475~dB). While all contrasts exhibit robust performance, the degree of stability varies. T1 shows the most stable response, while T2 demonstrates slightly higher sensitivity to estimation errors.

\subsection{Practical Implementation}
In our implementation, we utilize the \texttt{skimage} package \cite{van2014scikit} for noise level estimation, which proves sufficient for optimal performance. The model's demonstrated robustness suggests that even relatively simple estimation techniques can provide adequate noise level approximations for effective denoising. Our findings reveal significant practical advantages: the model readily adapts to real-world scenarios where exact noise levels are unknown, while standard noise estimation tools consistently deliver near-optimal performance. Furthermore, the observed slight preference for underestimation indicates that conservative noise level estimates may be advantageous in practice.

While future work could explore more sophisticated noise estimation techniques, particularly for extreme cases, our current results demonstrate that the model's inherent robustness already makes it highly practical for real-world applications. This robustness, combined with the effectiveness of standard noise estimation tools, enables our approach to function reliably as a blind denoising model, requiring minimal assumptions about the underlying noise characteristics.

\section{Additional Ablation Studies}
\label{sec:appendix_ablation}

\subsection{Effectiveness of Maximum corruption level $T$}

We also analyzed the impact of the maximum corruption level \(T\). All models were trained for 300 epochs with the same hyperparameters. As shown in Table~\ref{tab:ablation_T}, performance generally improves with higher \(T\), peaking at \(T = 10\) for the M4Raw dataset (T1 contrast). For the fastMRI dataset (PD contrast, 25/255 noise level), the best performance occurs at \(T = 5\). These results indicate that while increasing \(T\) generally benefits performance, excessively high corruption levels may not lead to further improvements within the given training budget and could require longer training times to converge.
\begin{table}[t]
    \centering
    \begin{tabular}{c|cc|cc}
        \toprule
        \multirow{2}{*}{$T$} & \multicolumn{2}{c|}{\textbf{M4Raw T1}} & \multicolumn{2}{c}{\textbf{fastMRI PD25}} \\
        & PSNR $\uparrow$ & SSIM $\uparrow$ & PSNR $\uparrow$ & SSIM $\uparrow$ \\
        \midrule
        $3$  & 33.30 & 0.869 & 30.58 & 0.740 \\
        $5$  & 33.91 & 0.879 & \textbf{31.01} & \textbf{0.765} \\
        $10$ & \textbf{34.91} & \textbf{0.890} & 30.41 & 0.751 \\
        $15$ & 34.35 & 0.882 & 30.42 & 0.743 \\
        $20$ & 34.43 & 0.882 & 30.52 & 0.755 \\
        \bottomrule
    \end{tabular}
    \caption{Influence of maximum corruption level $T$ on the M4Raw validation dataset.}
    \label{tab:ablation_T}
\end{table}

\subsection{Effectiveness of Reparameterization and EMA on Training Dynamics}
\label{subsec:reparam_ema}

In our framework, we leverage a reparameterization strategy to address the challenges associated with noise level sampling. The theoretical foundation of GDSM (see Theorem 1) requires sampling noise levels \( t > t_{\text{data}} \). However, directly sampling \( t \sim \mathcal{U}(t_{\text{data}}, T] \) can be sensitive to errors in noise level estimation. To mitigate this, we instead sample an auxiliary variable \(\tau \sim \mathcal{U}(0, T]\) and define the corresponding variance via
\[
\sigma_\tau^2 = \sigma_t^2 - \sigma_{t_{\text{data}}}^2.
\]
We now describe two key perspectives that elucidate why this reparameterization improves the training process:

\paragraph{Stable Noise Level Sampling.}
By recovering the original noise level through the transformation
\[
t = \sigma_t^{-1}\!\left(\sqrt{\sigma_\tau^2 + \sigma_{t_{\text{data}}}^2}\right),
\]
the additive term \(\sigma_\tau^2\) ensures that the recovered \(t\) remains valid even when \(\sigma_{t_{\text{data}}}\) is underestimated. This robustness contrasts with the direct sampling approach, where inaccuracies in estimating \(t_{\text{data}}\) would directly affect the noise level range.

\paragraph{Consistent Coverage Across Data Samples.}
Sampling \(\tau\) from a fixed interval \((0, T]\) guarantees that the entire noise level range is consistently covered during training, irrespective of individual sample noise characteristics. Although we approximate \(T \approx T'\) for practical implementation, the fixed-range sampling of \(\tau\) avoids the variability inherent in the direct sampling of \(t\) (which would otherwise depend on each sample’s noise level). This consistency is particularly beneficial for datasets with heterogeneous noise levels, leading to smoother convergence and more stable training dynamics.

In addition to reparameterization, we apply an Exponential Moving Average (EMA) with a decay rate of 0.999 during training. The combination of these techniques not only enhances stability but also accelerates convergence. Figure~\ref{fig:reparam_compare_FLAIR} presents a comparative analysis of the PSNR and SSIM metrics on the M4Raw FLAIR dataset over the first 125 training epochs (\(T = 5\) and \(\sigma_{\text{target}} = 0\)). The results demonstrate that the joint application of reparameterization and EMA stabilizes the training dynamics.

\begin{figure}[hbt]
\centering
\includegraphics[width=0.818\textwidth]{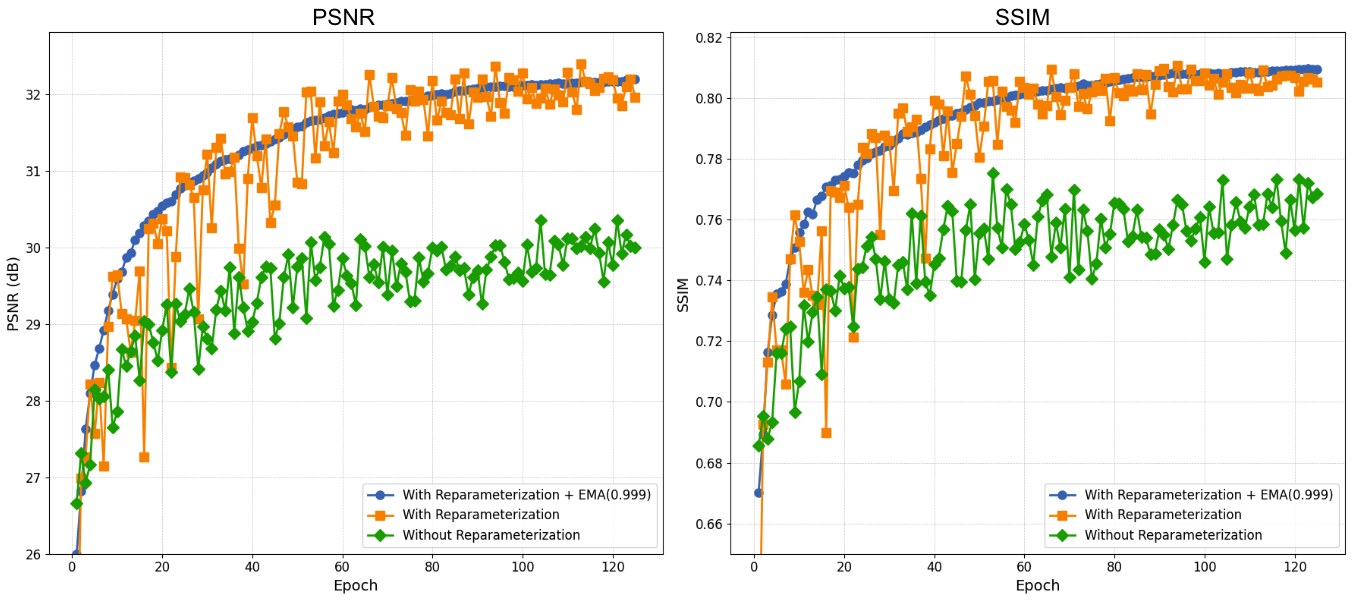}
\caption{
\textit{Effectiveness of Reparameterization in Noise Level Adjustment on the M4Raw FLAIR validation Dataset.}
Comparison of PSNR and SSIM metrics on the validation set for different model configurations over the first 125 training epochs (\(T = 5\), \(\sigma_{\text{target}} = 0\)). The combination of reparameterization and EMA (0.999) yields more stable training dynamics and improved convergence.
}
\label{fig:reparam_compare_FLAIR}
\end{figure}

\section{Guidelines for Selecting and Using Maximum Corruption Level}
\label{app:corruption_level_guidelines}

\subsection{Theoretical Bounds for Maximum Corruption Level}
Following insights from score-based generative models~\cite{song2020score}, the maximum corruption level $T$ can be theoretically chosen as large as the maximum Euclidean distance between all pairs of training data points to ensure sufficient coverage for accurate score estimation. Our analysis of MRI datasets reveals the following maximum pairwise distances:

\begin{itemize}
    \item \textbf{M4Raw Dataset:}
    \begin{itemize}
        \item T1 contrast: 56.72
        \item T2 contrast: 47.99
        \item FLAIR contrast: 65.20
    \end{itemize}
    \item \textbf{FastMRI Dataset:}
    \begin{itemize}
        \item PD ($\sigma=13/255$): 114.23
        \item PDFS ($\sigma=13/255$): 102.59
        \item PD ($\sigma=25/255$): 115.53
        \item PDFS ($\sigma=25/255$): 102.94
    \end{itemize}
\end{itemize}

\subsection{Practical Considerations and Training Dynamics}
\label{app:corruption_level_training_dynamics}

While theoretical bounds provide upper limits, our empirical studies show that significantly smaller values of $T$ can achieve optimal performance while maintaining computational efficiency. Table~\ref{tab:corruption_convergence} demonstrates the relationship between $T$ and training convergence:

\begin{table}[h!]
\centering
\begin{tabular}{@{}cc@{}}
\toprule
\textbf{Max Corruption Level ($T$)} & \textbf{Epochs to Converge} \\ \midrule
20                                  & 204 \\
15                                  & 183 \\
10                                  & 125 \\
5                                   & 79  \\
3                                   & 42  \\ \bottomrule
\end{tabular}
\caption{Relationship between maximum corruption level and training convergence on M4Raw dataset (T1).}
\label{tab:corruption_convergence}
\end{table}

\subsection{Alternative: Variance Preserving (VP) Formulation}
An alternative to the variance exploding (VE) formulation is the variance preserving (VP) formulation (Appendix~\ref{subsec:vp_extension}). Here, $\sigma_t$ is sampled uniformly from $(\sigma_{t_{\text{data}}}, 1)$, eliminating the need to estimate $T$. This approach avoids explicit selection of $T$, as the maximum corruption level is bounded by 1.

The corruption process in this formulation is given by:
\begin{equation}
    \mathbf{X}_{t_{\text{data}}} = \sqrt{1 - \sigma_{t_{\text{data}}}^2} \mathbf{X}_0 + \sigma_{t_{\text{data}}} \mathbf{Z}, \quad 0 < \sigma_{t_{\text{data}}} < 1,
\end{equation}
with additional noise levels sampled such that $\sigma_{t_{\text{data}}} < \sigma_t < 1$. This principled approach maintains theoretical guarantees and achieves comparable performance to VE in our experiments.

\subsection{Practical Guidelines}
Based on our analysis, we recommend starting with $T = 5$ for datasets exhibiting moderate noise levels, as this provides an effective baseline for most applications. For datasets with higher noise levels or more complex noise patterns, gradually increasing $T$ up to 20 may yield improved results, though practitioners should note that training time increases approximately linearly with $T$. Throughout the training process, careful monitoring of validation metrics is essential to determine optimal stopping points and assess the effectiveness of the chosen corruption level. In cases where dataset characteristics make the selection of $T$ particularly challenging, the VP formulation offers a principled alternative that maintains theoretical guarantees while eliminating the need for explicit $T$ selection.

\end{document}